\theoremstyle{plain}
\newtheorem{theorem}{Theorem}[section]
\newtheorem{corollary}{Corollary}[section]
\newtheorem{lemma}{Lemma}[section]
\newtheorem{proposition}{Proposition}[section]
\newcommand{\R}{\mathbb{R}}
\newcommand{\pz}{\sqrt{e^{2\phi}+|p|^2}}
\newcommand{\mcL}{\mathcal{L}}
\journal{Journal Differential Equations}
\begin{document}

\begin{frontmatter}

\title{Spatially homogeneous solutions of the \\ Vlasov-Nordstr\"om-Fokker-Planck system}


\author{Jos\'e Antonio Alc\'antara Felix}
\address{Department of Applied Mathematics\\ University of Granada\\ Granada, Spain\\ jaaf@correo.ugr.es}

\author{Simone Calogero}
\address{Department of Mathematics\\ Chalmers Institute of Technology\\ University of Gothenburg\\ Gothenburg, Sweden\\ calogero@chalmers.se}

\author{Stephen Pankavich}
\address{Department of Applied Mathematics and Statistics\\ Colorado School of Mines\\ Golden, CO USA\\ pankavic@mines.edu}

\begin{abstract}
The Vlasov-Nordstr\"{o}m-Fokker-Planck system describes the evolution of self-gravitating matter experiencing collisions with a fixed background of particles in the framework of a relativistic scalar theory of gravitation. 
We study the spatially-homogeneous system and prove global existence and uniqueness of solutions for the corresponding initial value problem in three momentum dimensions.  Additionally, we study the long time asymptotic behavior of the system and prove that even in the absence of friction, solutions possess a non-trivial asymptotic  profile. An exact formula for the long time limit of the particle density is derived in the ultra-relativistic case.
\end{abstract}

\begin{keyword}
Vlasov-Nordstr\"{o}m \sep
Fokker-Planck equation\sep 
spatially homogeneous \sep
global existence \sep
ultra-relativistic \sep
long time behavior

\end{keyword}

\end{frontmatter}

\section{Introduction}
The Vlasov-Nordstr\"om-Fokker-Planck (VNFP) system has been introduced in~\cite{AC} as a simplified model for the relativistic diffusion dynamics of self-gravitating particle systems.
In the absence of friction, the VNFP system is given by
\begin{align}
&\partial_t f+\frac{p}{\pz}\cdot\nabla_x f-\frac{e^{2\phi}\nabla_x\phi}{\pz}\cdot\nabla_pf=\sigma\, e^{2\phi}\partial_{p^{i}}\left(\frac{e^{2\phi}\delta^{i j}+p^{i}p^{j}}{\sqrt{e^{2\phi}+|p|^2}}\partial_{p^{j}}f\right),\label{FP}\\
&\partial_{t}^2\phi-\Delta_x\phi=-e^{2\phi}\int_{\mathbb{R}^3}\!\frac{f}{\sqrt{e^{2\phi}+|p|^2}}\,dp,\quad t>0,\ x\in\R^3,\ p\in\R^3,\label{nordstrom}
\end{align}
where $f=f(t,x,p)$ is the particle density in phase space, $\phi=\phi(t,x)$ is the Nordstr\"om gravitational potential generated by the particles, and $\sigma>0$ is the diffusion constant. The remaining physical constants, i.e., the speed of light $c$, the mass $m$ of the particles, and the gravitational constant $G$, have been set equal to one. The physical interpretation of a solution of~\eqref{FP}-\eqref{nordstrom} is as follows. Spacetime is curved by the action of the gravitational forces and is given by the manifold $(\R^4,g)$, where $g$ is the conformally Minkowskian metric $g=\exp(2\phi)\eta$. 
In the collisionless case (i.e., for $\sigma=0$), the VNFP system reduces to the Nordstr\"om-Vlasov system~\cite{CR,C1}, a toy model for the full general relativistic Einstein-Vlasov system~\cite{hakan}. In contrast to the collisionless case, particles undergoing diffusion no longer move along the geodesics of spacetime. Their trajectories are defined through the system of stochastic differential equations naturally associated to the Fokker-Planck equation~\eqref{FP} via Ito's formula.

A consistent theory for the diffusion dynamics of particle systems in General Relativity has been proposed in~\cite{C2}, but due to the well-known complexity of the Einstein field equations, it seems wise to deal first with the analysis of the system~\eqref{FP}-\eqref{nordstrom}. The VNFP system already captures some of the essential features of relativistic gravitational systems undergoing diffusion: the hyperbolic character of the field equation,  the invariance under Lorentz transformations, and the space-time dependence of the diffusion matrix. These features distinguish the model under study from the Vlasov-Poisson-Fokker-Planck system, which is the non-relativistic analogue of the VNFP system~\cite{bou, bdol, csv, dol, dr2, gsz,ono}. While the non-relativistic problem has been investigated for a long time, the interest on relativistic diffusion models has only recently started to increase~\cite{AC, C2, DH1, DH2, frajan, herr, herr2, PM, PM2}.

In this paper we make a further simplification by restricting the discussion to spatially homogeneous solutions $(f=f(t,p), \phi=\phi(t))$, for which the VNFP system becomes (setting $\sigma=1$)
\begin{align}
&\partial_t f=\, e^{2\phi}\partial_{p^{i}}\left(\frac{e^{2\phi}\delta^{i j}+p^{i}p^{j}}{\sqrt{e^{2\phi}+|p|^2}}\partial_{p^{j}}f\right),\label{FPhom}\\
&\ddot{\phi}=-e^{2\phi}\int_{\mathbb{R}^3}\!\frac{f}{\sqrt{e^{2\phi}+|p|^2}}\,dp,\quad t>0,\ p\in\R^3.\label{nordstromhom}
\end{align}
Our results concern the global existence and uniqueness of solutions of the system~\eqref{FPhom}-\eqref{nordstromhom} and their asymptotic behavior as $t \to \infty$. Remarkably, and in contrast to the non-relativistic case~\cite{csv}, we find that the particle density $f$ does not vanish in the limit $t\to \infty$, as one would expect from the absence of a friction term in the diffusion equation. 
Moreover we show the gravitational potential $\phi$ blows-up to $-\infty$ as $t\to\infty$. 
Of course, the asymptotic behavior of the potential is the key to understanding the non-trivial large time limit of the particle density, as it implies that the action of the diffusion operator in the right  side of~\eqref{FPhom} becomes weaker and weaker as $t\to \infty$. To understand this mechanism in a simpler context, consider the non-autonomous heat equation
\begin{equation}\label{example}
\partial_tu=\lambda(t)\Delta_x u,\quad t>0,\ x\in\R^3,
\end{equation}
where $\lambda(t)$ is a smooth positive function integrable on $(0,\infty)$.
Upon introducing the change of variables $\tau(t)=\int_0^t\lambda(s)\,ds$, equation~\eqref{example} transforms into the standard, autonomous heat equation. It follows that the solution of~\eqref{example} with initial datum $u(0,x)=u_\mathrm{in}(x)$ is given by
\[
u(t,x)=\frac{1}{(4\pi\tau(t))^{3/2}}\int_{\R^3}u_0(y)\,e^{-\frac{|x-y|^2}{4\tau(t)}}\,dy.
\]
Hence, as $t\to\infty$,
\[
u(t,x)\sim\frac{1}{(4\pi\tau_\infty)^{3/2}}\int_{\R^3}u_0(y)\,e^{-\frac{|x-y|^2}{4\tau_\infty}}\,dy,\quad\text{where}\ \tau_\infty=\lim_{t\to\infty}\tau(t)=\int_0^\infty \lambda(s)\,ds<\infty,
\]
i.e., the solution has a non-trivial asymptotic profile.

The paper proceeds as follows.  In Section~\ref{mainresec} we state and prove our main result.  Namely, we establish the global existence and uniqueness of weak solutions to (\ref{FPhom})-(\ref{nordstromhom}), derive the asymptotic behavior of the scalar field (in particular $\phi\to -\infty$, as $t\to \infty$, linearly in time) and show that the particle density does not vanish as $t\to\infty$.
Since the differential operator in the right side of~\eqref{FPhom} is not uniformly elliptic and has time dependent coefficients, the standard theory for parabolic equations does not apply in our case, and we shall need to rely on stochastic methods to prove existence of solutions.   Section $3$ is then devoted to a detailed study of the asymptotic profile of the particle density in the ultra-relativistic regime, i.e., when the particle density solves the equation
\begin{equation}\label{ultrarelFPhom}
\partial_t f=\, e^{2\phi}\partial_{p^{i}}\left(\frac{p^{i}p^{j}}{|p|}\partial_{p^{j}}f\right),
\end{equation}
instead of~\eqref{FPhom}. Our main result is Theorem~\ref{hasym}, where we show that solutions of~\eqref{ultrarelFPhom} satisfy  $f\to f_\infty$ in $L^\infty$ as $t\to\infty$, where $f_\infty(p)>0$ is given by the solution of the linear ultra-relativistic Fokker-Planck equation evaluated at the finite time $\tau_\infty=\|e^{2\phi}\|_{L^1(\R^3)}$.
In particular, we are able to compute this limit $f_\infty$ explicitly.
It remains an interesting open problem to prove the analogous long time behavior of solutions in the purely relativistic case (\ref{FPhom}).
Finally, details of more technical computations are included within an appendix.

\section{Main result}~\label{mainresec}
We begin by fixing some notation. Given $T>0$ we denote by $C_T$ any positive constant that depends on the time interval $[0,T]$. If $\sup_{T>0}C_T<\infty$, we denote $C_T\equiv C$.  The diffusion matrix in the Fokker-Planck equation~\eqref{FPhom} will be denoted by $D[\phi]$ with entries
\begin{equation}\label{defD}
D^{ij}[\phi]=\frac{e^{2\phi}\delta^{i j}+p^{i}p^{j}}{\sqrt{e^{2\phi}+|p|^2}}.
\end{equation} 
In the appendix we collect some identities and estimates satisfied by $D[\phi]$ that are used throughout the paper. We use the index summation rule, which means that an index appearing twice in an expression, once in the lower and once in the upper position, is summed over $\{1,2,3\}$, e.g., 
\[
A^{ij} B_{jk}=\sum_{j,k=1}^3 A^{ij}B_{jk}.
\]
Moreover, indices are raised and lowered with the Kronecker symbol, e.g., $D^i_{\ j}=D^{ik}\delta_{kj}$, $p_i=\delta_{ij}p^j$.
The Banach space $X$ is defined throughout as
\[
X=\{g:\R^3\to \R: g\in L^1\cap L^2,\ \nabla g\in L^2,\ \ \mathrm{and} \ \ p\to |p| g(p)\in L^1\},
\]
with the usual notation for Lebesgue and Sobolev spaces.
The ultimate purpose of this section is to prove the following theorem.

\begin{theorem}\label{global}
Given  $\phi_\mathrm{in},\psi_\mathrm{in}\in\R$ and $f_\mathrm{in}\in X$ such that $f_\mathrm{in}\geq 0$ a.e., 
there exists a solution $f\in L^\infty((0,\infty); X)$, $\phi\in C^1((0,\infty))\cap W_\mathrm{loc}^{2,\infty}([0,\infty))$ of ~\eqref{FP}-\eqref{nordstrom} such that $f(0,p)=f_\mathrm{in}(p)$ and $(\phi(0),\dot{\phi}(0))=(\phi_\mathrm{in},\psi_\mathrm{in})$. Moreover $f\geq 0$ a.e., the total mass is conserved, i.e.,   $$\|f(t)\|_{L^1(\R^3)}=\|f_\mathrm{in}\|_{L^1(\R^3)},$$ and there exist constants $\alpha,\beta,\varepsilon,C>0$ such that
\begin{equation}\label{pointestfield}
-C-\alpha\, t\leq\phi(t)\leq C-\beta\,t,\quad |\dot{\phi}(t)|<C,\quad -C e^{-\alpha\, t}\leq \ddot\phi(t)\leq 0,
\end{equation}
\begin{equation}\label{fnovanish}
\mu(\{p:|f(t,p)|>\varepsilon)\})>C,
\end{equation}
where $\mu$ denotes the Lebesgue measure.
Finally, if the initial datum $f_\mathrm{in}$ satisfies 
\begin{equation}
\label{secondmombounded}
\int_{\R^3}\left[ (1 + |p|^2)^\delta |\nabla_p f_\mathrm{in}|^2 + \left( 1 + |p|^2 \right)^{\delta+1} |\nabla_p^2 f_\mathrm{in}|^2 \right] \, dp<\infty,
\end{equation}
for some $\delta>1/2$, then the estimate 
\begin{equation}
\label{momsbounded}
\int_{\mathbb{R}^3} ( 1 + |p|^2)^{\delta} \vert \nabla_p f \vert^2\,dp+(1+t)^{-1} \int_{\R^3} (1 + |p|^2)^{\delta+1} \vert \nabla^2_p f \vert^2 \, dp < C,
\end{equation} 
holds for all $t>0$, and the solution is unique.
\end{theorem}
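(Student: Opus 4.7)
My approach combines a Picard iteration for local existence with sharp a priori bounds that close globally, then uses these bounds to extract the asymptotic behavior, the non-vanishing property, and the higher regularity. For fixed $\phi\in C^1$, (\ref{FPhom}) is a linear degenerate parabolic equation in $p$ with smooth coefficients; since standard parabolic theory does not apply (the smallest eigenvalue of $D[\phi]$ behaves like $e^{2\phi}/|p|$ as $|p|\to\infty$), I represent the solution via the associated stochastic differential equation
\[
dP^i_t=e^{2\phi(t)}\,\partial_{p^j}D^{ij}[\phi(t)](P_t)\,dt+\sigma^{ij}(t,P_t)\,dW^j_t,\qquad \sigma\sigma^T=2e^{2\phi}D[\phi],
\]
whose coefficients are smooth in $p$ and grow at most linearly in $|p|$, so It\^o theory yields a unique strong solution whose law gives $f$. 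The coupling to (\ref{nordstromhom}) is closed by inserting $\rho(t)=\int f/\pz\,dp$ into the equation for $\phi$, and a contraction argument on a small interval in $C^1([0,T^*])\times L^\infty((0,T^*);X)$ produces a local solution of the coupled system.

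\textbf{Global estimates and asymptotic behavior of $\phi$.} Positivity of $f$ is inherited from the stochastic representation and mass conservation from the divergence form of (\ref{FPhom}). Using $\pz\ge e^\phi$ one obtains $|\ddot\phi|\le Me^\phi$ with $M=\|f_{\mathrm{in}}\|_{L^1}$, and $f\ge 0$ in (\ref{nordstromhom}) forces $\ddot\phi\le 0$, so $\dot\phi$ is nonincreasing and admits a limit $\ell\in[-\infty,\psi_{\mathrm{in}}]$. The \emph{main technical obstacle}, to my mind, is to prove $\ell<0$. I would argue this by contradiction: if $\ell\ge 0$ then $\phi(t)\ge\phi_{\mathrm{in}}$ for all $t$, and a careful analysis via It\^o of the second moment $\mathbb{E}|P_t|^2$ (using the bounds on the drift and on $\mathrm{tr}(\sigma\sigma^T)$) combined with a Chebyshev-type lower bound on $\int f(t,p)\,dp$ over an appropriately growing ball in $p$ yields $\rho(t)$ decaying no faster than a rate that makes $\int_0^\infty e^{2\phi}\rho\,ds=+\infty$; hence $\dot\phi\to-\infty$, contradicting $\ell\ge 0$. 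Once $\ell=-\alpha<0$ is established, the monotonicity of $\dot\phi$ gives $\phi(t)\le C-\beta t$ for any $\beta\in(0,\alpha)$, so $|\ddot\phi|\le Me^C e^{-\beta t}$; this exponential decay yields the claimed bound on $\ddot\phi$, the uniform bound on $|\dot\phi|$, and, upon integration, the linear lower bound $\phi\ge -C-\alpha t$.

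\textbf{Non-vanishing of $f$, higher regularity, and uniqueness.} Given the asymptotics of $\phi$, the quantity $\tau_\infty:=\int_0^\infty e^{2\phi(s)}\,ds$ is finite, so via the SDE the total quadratic variation of $P_t$ is uniformly bounded on compact sets of initial momenta; a ball on which $f_{\mathrm{in}}\ge\varepsilon$ is only displaced a bounded amount by the stochastic flow, producing $\varepsilon',\delta>0$ such that $\mu(\{p:f(t,p)>\varepsilon'\})\ge\delta$ uniformly in $t$, which is (\ref{fnovanish}). The weighted estimate (\ref{momsbounded}) under (\ref{secondmombounded}) is obtained by differentiating (\ref{FPhom}) in $p$, testing against $(1+|p|^2)^\delta\nabla_pf$ and $(1+|p|^2)^{\delta+1}\nabla_p^2 f$, integrating by parts, and controlling the commutator and weight errors via the algebraic identities on $D[\phi]$ collected in the appendix together with the exponential decay of $e^{2\phi}$ and the boundedness of $\dot\phi$; the factor $(1+t)^{-1}$ on the second-derivative term reflects that it is closed only after one time integration of the first-derivative quantity. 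Finally, uniqueness follows by subtracting two solutions, deriving a weighted $L^2$ energy inequality for $g=f_1-f_2$ coupled to an ODE for $\chi=\phi_1-\phi_2$, where (\ref{momsbounded}) is precisely what is needed to bound the cross-term $\|(D[\phi_1]-D[\phi_2])\nabla_p f_1\|_{L^2}\lesssim |\chi|\,\|(1+|p|^2)^{1/2}\nabla_p f_1\|_{L^2}$, allowing closure by Gronwall.
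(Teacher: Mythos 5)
Your overall architecture matches the paper's: a stochastic (Feynman--Kac/Kolmogorov) representation for the linear Fokker--Planck step, an iteration for the coupled system, a contradiction argument for $\dot\phi_\infty<0$, weighted energy estimates for $\nabla_pf$ and $\nabla_p^2f$, and a Gr\"onwall closure for uniqueness. However, the single most important step --- proving $\dot\phi_\infty<0$ --- is not actually carried out. You correctly identify the mechanism you need (a moment bound growing only like $\int_0^t e^{2\phi}$, forcing $\rho(t)=\int f/\sqrt{e^{2\phi}+|p|^2}\,dp$ to decay slowly enough that $\int_0^\infty e^{2\phi}\rho\,ds=\infty$ under the contradiction hypothesis), but you leave the decisive computation as ``a careful analysis via It\^o \dots yields.'' The paper's device is the exact conservation law \eqref{enid}, $\mathcal{E}(t)=\mathcal{E}(0)+3M\int_0^te^{2\phi(s)}\,ds$ for the energy $\mathcal{E}=\int f\sqrt{e^{2\phi}+|p|^2}\,dp+\tfrac12\dot\phi^2$, combined with Cauchy--Schwarz \eqref{MEineq}, $M^2\le\rho(t)\mathcal{E}(t)$; this turns \eqref{nordstromhom} into $\ddot\phi\le-\tfrac{M}{3}\tfrac{d}{dt}\log\bigl[\mathcal{E}(0)+3M\int_0^te^{2\phi}\bigr]$, and integrating while using $\phi\ge\phi_\mathrm{in}$ (valid if $\dot\phi_\infty\ge0$) sends $\dot\phi$ to $-\infty$, a contradiction. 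Note that the $\tfrac12\dot\phi^2$ term in $\mathcal{E}$ is essential to cancel the $\dot\phi$-dependence of the weight $\sqrt{e^{2\phi}+|p|^2}$; a raw bound on $\mathbb{E}|P_t|^2$ does not produce this cancellation automatically, and without exhibiting it the asymptotics \eqref{pointestfield}, and everything downstream of them, remain unproved.

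The second gap is the non-vanishing property \eqref{fnovanish}. The claim that a ball on which $f_\mathrm{in}\ge\varepsilon$ is ``only displaced a bounded amount by the stochastic flow'' does not yield a pointwise lower bound on a set of fixed measure: the cumulative drift is of order $\tau_\infty=\int_0^\infty e^{2\phi}\,dt$, which is finite but not small, so the process started in the ball can be carried entirely out of it, and a uniform-in-$t$ lower bound on $\mathbb{P}(P^p_t\in B_r(p_0))$ for this degenerate diffusion would require a support or Harnack-type argument you do not supply; bounded quadratic variation alone also does not prevent the density from spreading thin over a large region. Your underlying intuition (tightness of $f(t,\cdot)$ because the dynamics switches off as $e^{2\phi}\to0$) is the right one, but it must be paired with an \emph{upper} bound rather than a pointwise lower bound along the flow: the paper derives $\int|p|f\,dp\le C$ uniformly in $t$ from \eqref{estimatesf2} together with $\int_0^\infty e^{\phi}\,dt<\infty$, deduces $\|f(t)\|_{L^q}\ge C>0$ for some $q\in(1,2)$ by optimizing $M\le C\|f\|_{L^q}R^{3-3/q}+R^{-1}\int|p|f\,dp$ over $R$, and concludes via the $p,q,r$-theorem using the conserved $L^1$ norm and the nonincreasing $L^2$ norm. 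The remaining components of your proposal (local well-posedness of the two decoupled problems, the weighted estimates for \eqref{momsbounded}, and the uniqueness Gr\"onwall argument coupling $f_1-f_2$ with $\phi_1-\phi_2$) agree with the paper's treatment.
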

Notice that the estimate~\eqref{fnovanish} shows that $f$ is not vanishing, not even asymptotically so. We remark that the crucial ingredient to prove~\eqref{fnovanish} is the uniform estimate $\int_{\R^3} |p|f\,dp\leq C$, which in turn is a consequence of the field decay, see~\eqref{estimatesf2}. 
As the proof of Theorem~\ref{global} is rather long, we split it into several subsections.

\subsection{The Nordstr\"om equation}
Assume first that $0\leq f\in C((0,\infty);L^1(\R^3))$ is given and consider the Cauchy problem for the Nordstr\"om field equation
\begin{subequations}\label{ns}
\begin{align} 
\ddot{\phi}(t) &=-H_f(t,\phi),\quad t>0,\label{nso}\\
\phi(0) & =\phi_\mathrm{in},\ \dot{\phi}(0)=\psi_{\mathrm{in}},\label{nosin0}
\end{align}
with 
\begin{equation}
H_f(t,\phi)=e^{2\phi}\int_{\mathbb{R}^3}\!\frac{f(t,p)}{\sqrt{e^{2\phi}+|p|^2}}\, dp . \label{ache}
\end{equation}
\end{subequations}
Since the function $x\to e^{2x}/\sqrt{e^{2x}+|p|^2}$ is convex and monotonically increasing, we have
\begin{align}
|H_f(t,\phi_1)-H_f(t,\phi_2)&|\leq\partial_\phi H_f(t,\phi_*)|\phi_2-\phi_1|= e^{2\phi_*}|\phi_2-\phi_1|\int_{\R^3}f(t,p)\frac{e^{2\phi_*}+2|p|^2}{(e^{2\phi_*}+|p|^2)^{3/2}}\,dp\nonumber\\
&\leq 2\|f(t)\|_{L^1(\R^3)}e^{\phi_*}|\phi_2-\phi_1|,\label{lip}
\end{align}
where $\phi_*=\max\{\phi_1,\phi_2\}$.
Letting $\psi=\dot{\phi}, y=(\phi,\psi)$ and $F(t,y)=(y_2,-H_f(t,y_1))$, equation~\eqref{nso} becomes $\dot{y}=F(t,y)$. From the regularity of $f$ and the estimate~\eqref{lip}, the function $F$ is continuous in $t>0$ and locally Lipschitz in $y$, uniformly in the time variable. It follows by Picard's theorem that the Cauchy problem~\eqref{ns} has a unique local classical solution. Moreover by straightforward estimates we obtain
\begin{subequations}\label{estfield}
\begin{align}
-\mathcal{K}_f(t)e^{\phi(t)} \leq -H_f(t,\phi)=\ddot{\phi}(t) &\leq 0, \\
\psi_\mathrm{in}-\mathcal{K}_f(t)\int_0^t\!e^{\phi(s)}\,ds\leq \dot{\phi}(t) &\leq \psi_\mathrm{in},\\
\psi_\mathrm{in} t+\phi_\mathrm{in}-\mathcal{K}_f(t)\int_0^t\int_0^s\!e^{\phi(\tau)}\,d\tau\,ds \leq \phi(t) & \leq \psi_\mathrm{in} t+\phi_\mathrm{in},
\end{align}
\end{subequations}
where
\begin{equation}\label{defK}
\mathcal{K}_f(t)=\sup_{s\in (0,t)}\|f(s)\|_{L^1(\R^3)}.
\end{equation}
These estimates imply that $\phi\in W^{2,\infty}((0,T))$ and
\begin{equation}\label{estphi}
\|\phi\|_{W^{2,\infty}((0,T))}\leq C_T\mathcal{K}_f(T),
\end{equation}
for all $T>0$. Hence we have proved 
\begin{proposition}\label{exiunin} The Cauchy problem~\eqref{ns} has a unique global solution $\phi\in C^2((0,\infty))$. Moreover the solution satisfies the estimates~\eqref{estfield}-\eqref{estphi}, for all $t\in [0,T]$ and $T>0$.
\end{proposition}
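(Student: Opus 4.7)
The plan is to apply Picard's theorem for local existence and uniqueness in the phase-space formulation $\dot y = F(t,y)$ with $y=(\phi,\dot\phi)$, and then to extend the local solution globally by deriving a priori pointwise bounds on $(\phi,\dot\phi)$ that rule out finite-time blowup.

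For the local theory, with $F(t,y)=(y_2,-H_f(t,y_1))$, continuity of $F$ in $t$ is inherited from the hypothesis $f\in C((0,\infty);L^1(\R^3))$, while the Lipschitz estimate~\eqref{lip} shows that $F$ is locally Lipschitz in $y$, uniformly in $t$ on bounded intervals. Picard's theorem then produces a unique maximal classical solution $\phi\in C^2((0,T_{\max}))$ for some $T_{\max}\in(0,\infty]$.

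Next I would derive the a priori estimates~\eqref{estfield} on $[0,T_{\max})$. The upper bound $\ddot\phi\leq 0$ is immediate from $f\geq 0$ and the definition~\eqref{ache} of $H_f$. For the lower bound I would estimate the integrand in $H_f$ using $\sqrt{e^{2\phi}+|p|^2}\geq e^{\phi}$, which gives $H_f(t,\phi(t))\leq \mathcal{K}_f(t)e^{\phi(t)}$. Single and then double time integration, combined with the initial conditions~\eqref{nosin0}, produce the bounds on $\dot\phi$ and $\phi$ appearing in~\eqref{estfield}.

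The one genuine obstacle is that the Lipschitz constant in~\eqref{lip} depends on $e^{\phi_*}$, which is not a priori bounded, so a naive continuation argument could fail if $\phi$ blew up in finite time. However, the upper estimate $\phi(t)\leq \psi_{\mathrm{in}}t+\phi_{\mathrm{in}}$ keeps $e^{\phi(t)}$ finite on every $[0,T]\subset [0,T_{\max})$, and together with the lower bounds on $\phi$ and $\dot\phi$ from~\eqref{estfield} this prevents $|y(t)|$ from escaping any bounded region of phase space in finite time. The standard continuation alternative for ODEs then forces $T_{\max}=\infty$, and the $W^{2,\infty}((0,T))$-bound~\eqref{estphi} follows immediately from the uniform pointwise control on $\ddot\phi$ furnished by~\eqref{estfield}.
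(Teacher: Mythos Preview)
Your proposal is correct and follows essentially the same approach as the paper: reformulate as $\dot y=F(t,y)$, use~\eqref{lip} for local Lipschitz continuity, apply Picard for local existence and uniqueness, derive the a priori bounds~\eqref{estfield} from $H_f\leq \mathcal{K}_f e^{\phi}$, and then use these bounds to extend globally and obtain~\eqref{estphi}. If anything, you are slightly more explicit than the paper about the continuation step (noting that the upper bound $\phi(t)\leq \psi_{\mathrm{in}}t+\phi_{\mathrm{in}}$ controls the $e^{\phi_*}$-dependent Lipschitz constant), but the underlying argument is the same.
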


\subsection{The linear Fokker-Planck equation}
In this section we assume that $\phi\in C^2((0,\infty))\cap W^{1,\infty}_\mathrm{loc}([0,\infty))$ is given and consider the Cauchy problem for the linear Fokker-Planck equation
\begin{subequations}\label{cauchyFP}
\begin{align}
&\partial_t f=e^{2\phi}\partial_{p^i}(D^{ij}[\phi]\partial_{p^j}f),\quad t>0,\ p\in\R^3,\label{FPlin}\\
& f(0,p)=f_\mathrm{in}(p),\quad p\in\R^3,
\end{align}
\end{subequations}
where we recall that the diffusion matrix $D^{ij}[\phi]$ is defined by~\eqref{defD}. The purpose of this subsection is to prove the following result.
\begin{proposition}\label{existenceFP}
Given $0\leq f_\mathrm{in}\in C^2_c(\R^3)$, there exists a unique, positive, classical solution of the Cauchy problem~\eqref{cauchyFP}. Moreover $f$ satisfies
\begin{equation}\label{estimatesf}
\|f(t)\|_{L^1(\R^3)}=\|f_\mathrm{in}\|_{L^1(\R^3)},\qquad \|f(t)\|_{L^q(\R^3)}\leq \|f_\mathrm{in}\|_{L^q(\R^3)},\ \text{for all $q>1$.}
\end{equation}
Finally, for all $\gamma\geq 0$ there exists a constant $C>0$, which depends only on $\gamma$, such that
\begin{equation}\label{estimatesf2}
\int_{\R^3}(e^{2\phi}+|p|^2)^\gamma\,(f+|\nabla_pf|^2)\,dp\leq C\exp \left(C \int_0^t\mathcal{Q}_\phi(s)\,ds\right)\int_{\R^3}(e^{2\phi}+|p|^2)^\gamma\,(f_\mathrm{in}+|\nabla_pf_\mathrm{in}|^2)\,dp
\end{equation}
\begin{align}
\int_{\R^3}(e^{2\phi}+|p|^2)^\gamma\,|\nabla^2_pf|^2\,dp&\leq C\left (\int_{\R^3}(e^{2\phi}+|p|^2)^\gamma\,|\nabla^2_pf_\mathrm{in}|^2+\int_0^t \int_{\R^3}(e^{2\phi}+|p|^2)^\gamma\,|\nabla_pf|^2\,dp\ ds \right)\nonumber\\
&\qquad\qquad\times\exp \left(C \int_0^t\mathcal{Q}_\phi(s)\,ds\right)\label{seconderest}
\end{align}
where, denoting $(z)_+=\max(0,z)$,
\begin{equation}\label{defQ}
\mathcal{Q}_\phi(t)=e^{\phi(t)}+(\dot{\phi})_+(t).
\end{equation}
\end{proposition}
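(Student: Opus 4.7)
The plan is to establish existence and uniqueness via an elliptic regularization and then prove the a priori estimates by weighted energy methods, with weight $w(t,p)=(e^{2\phi(t)}+|p|^2)^\gamma$.

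For existence, I would add a vanishing viscosity perturbation $\varepsilon\Delta_p f$ to the right-hand side of~\eqref{FPlin}. The perturbed equation is uniformly parabolic with smooth coefficients on any compact $t$-interval, and since $f_\mathrm{in}\in C^2_c(\R^3)$, classical parabolic theory produces a unique smooth solution $f_\varepsilon$, which remains nonnegative by the minimum principle. Uniform-in-$\varepsilon$ versions of~\eqref{estimatesf}--\eqref{seconderest} then permit passing to the limit $\varepsilon\downarrow 0$ to yield a classical solution of~\eqref{cauchyFP}. Uniqueness in this class follows from applying the weighted $L^2$ estimate to the difference of two solutions.

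The $L^1$ conservation in~\eqref{estimatesf} is obtained by integrating~\eqref{FPlin} over $\R^3$ and observing that the right-hand side is a pure $p$-divergence with sufficient decay; the $L^q$-decrease for $q>1$ follows from multiplying~\eqref{FPlin} by $qf^{q-1}$ and integrating, the resulting dissipation $-q(q-1)\int e^{2\phi}f^{q-2}D^{ij}\partial_{p^i}f\,\partial_{p^j}f\,dp$ being nonpositive because $D[\phi]\geq 0$. For the first bound in~\eqref{estimatesf2} I would compute $\frac{d}{dt}\int w(f+|\nabla_p f|^2)\,dp$. The time derivative $\partial_t w = 2\gamma e^{2\phi}\dot\phi(e^{2\phi}+|p|^2)^{\gamma-1}$ is bounded by $2|\gamma|(\dot\phi)_+ w$ (when $\dot\phi<0$ its contribution is actually favorable), producing the $(\dot\phi)_+$ part of $\mathcal{Q}_\phi$. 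Integration by parts on the $f$-term and, after differentiating~\eqref{FPlin} in $p^k$ and multiplying by $w\,\partial_{p^k}f$, on the gradient term, generates the nonpositive dissipation $-\int e^{2\phi}w D^{ij}\partial_{p^i}\partial_{p^k}f\,\partial_{p^j}\partial_{p^k}f\,dp$ along with error terms involving $\partial_p w$, $\partial_p D$ and $\partial_p^2 D$; using the bounds $|\partial_p w|\lesssim|p|w/(e^{2\phi}+|p|^2)$ and the identities for $D[\phi]$ collected in the appendix, these errors are controlled by $Ce^{\phi}\int w(f+|\nabla_p f|^2)\,dp$, which supplies the $e^{\phi}$ part of $\mathcal{Q}_\phi$. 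Gronwall's lemma concludes.

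Estimate~\eqref{seconderest} follows by the same procedure applied to the equation for $\partial^2_{p^kp^l}f$, the extra inhomogeneous terms (involving second derivatives of $D^{ij}$) producing a forcing whose $L^2_w$ norm is controlled by $|\nabla_p f|$, which is already bounded through~\eqref{estimatesf2} and therefore appears on the right-hand side. The main technical obstacle throughout is to check that every commutator between the weight $w$, the momentum derivatives, and the degenerate diffusion $D[\phi]$ produces errors controlled purely by $\mathcal{Q}_\phi(t)=e^{\phi}+(\dot\phi)_+$, without picking up $|\dot\phi|$ or non-decaying tails; this relies on the algebraic identities for $D[\phi]$ from the appendix and on the simple inequality $e^{2\phi}\leq e^{2\phi}+|p|^2$, used repeatedly to trade the prefactor $e^{2\phi}$ in $\partial_t w$ against the weight itself.
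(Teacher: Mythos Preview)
Your treatment of the a priori estimates~\eqref{estimatesf}--\eqref{seconderest} matches the paper's approach closely: both use the weight $w=(e^{2\phi}+|p|^2)^\gamma$, differentiate in time, integrate by parts, and control the commutator terms via the derivative bounds on $D[\phi]$ from the appendix, finishing with Gr\"onwall. Your observation that the $\dot\phi$-contribution is favorable when $\dot\phi<0$ (so only $(\dot\phi)_+$ enters $\mathcal{Q}_\phi$) is exactly what the paper does.

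The existence argument, however, differs. The paper does \emph{not} use vanishing viscosity; it invokes the theory of stochastic differential equations. After the time reversal $t\mapsto -t$, equation~\eqref{FPlin} becomes the backward Kolmogorov equation for the SDE $dP=d(s,P)\,ds+G(s,P)\cdot dW$, with $d$ and $G$ given by~\eqref{ddef}--\eqref{Gdef}. The linear-growth and $C^2$ bounds on $d,G$ (verified in the appendix) allow the authors to quote Arnold's theorem and obtain a classical solution via the Feynman--Kac formula $\bar f(t,p)=\mathbb{E}[f_\mathrm{in}(P(0,p;t))]$, which is automatically nonnegative. The paper explicitly flags (just before the statement of the main theorem) that ``the standard theory for parabolic equations does not apply in our case,'' which is why it reaches for stochastic tools.

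Your viscosity route is in principle viable, but your sketch glosses over the very difficulty the paper is avoiding: even after adding $\varepsilon\Delta_p$, the diffusion matrix $e^{2\phi}D[\phi]+\varepsilon I$ is uniformly elliptic from below but still \emph{unbounded} (it grows like $|p|$), so ``classical parabolic theory'' does not apply off the shelf; you would need either a further cutoff of the coefficients or a weighted-space framework, and then an argument that the limit $\varepsilon\to 0$ yields a \emph{classical} (not merely weak) solution. The SDE approach buys you a direct classical representation under exactly the linear-growth hypotheses that hold here, at the cost of importing probabilistic machinery. If you prefer to stay purely PDE-based, you should make the regularization two-layered (viscosity plus coefficient truncation) and spell out the compactness needed to recover classical regularity in the limit.
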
  
\begin{proof}
For the proof of existence we employ methods from the theory of stochastic differential equations and diffusion processes described in~\cite{arnold}. To adhere with the formulation used in~\cite{arnold}, we define the functions
\[
\bar{f}(t,p)=f(-t,p),\quad \bar{\phi}(t)=\phi(-t),\quad t<0,\ p\in\R^3,
\] 
in terms of which the Cauchy problem~\eqref{cauchyFP} becomes
\begin{subequations}\label{newcauchyFP}
\begin{align}
&\partial_t\bar{f}+\mathcal{D}\bar{f}=0,\quad t<0,\ p\in\R^3,\label{newFP}\\
&\bar{f}(0,p)=f_\mathrm{in}(p)\quad p\in\R^3,
\end{align}
\end{subequations}
where $\mathcal{D}$ is the differential operator
\[
\mathcal{D}=d^i\partial_{p^i}+\frac{1}{2}b^{ij}\partial_{p^i}\partial_{p^j},
\]
and
\begin{equation}\label{ddef}
d^i(t,p)=e^{2\bar{\phi}}\partial_{p^j}(D^{ij}[\bar{\phi}])=\frac{3e^{2\bar{\phi}}p^i}{\sqrt{e^{2\bar{\phi}}+|p|^2}},\qquad b^{ij}(t,p)=2e^{2\bar{\phi}}D^{ij}[\bar{\phi}].
\end{equation}
Let $G$ denote the square root of $b$, i.e., the unique positive definite matrix such that $b=G\cdot G^T$. It can be verified that
\begin{equation}\label{Gdef}
G^{ij}=\frac{\sqrt{2}e^{\bar{\phi}}}{(e^{2\bar{\phi}} +|p|^2)^{1/4}}\left(e^{\bar{\phi}}\delta^{ij}+\frac{p^i p^j}{e^{\bar{\phi}}+\sqrt{e^{2\bar{\phi}} +|p|^2}}\right).
\end{equation}
Now let $T>0$ be fixed. Note that $|d(t,p)|+|G(t,p)|\leq C_T (1+|p|)$, for $t\in [-T,0]$. Additionally, we show in the appendix that the first and second derivatives of $d$ and $G$ with respect to $p$ are bounded. These estimates are exactly those required to apply~\cite[Th.~9.4.4]{arnold}. Hence for any $t\in [-T,0]$, the system of stochastic differential equations
\begin{equation}\label{sde}
dP=b(s,P)\,ds+G(s,P)\cdot dW,
\end{equation}
with $dW$ denoting the standard Wiener process, admits a unique solution $P(s; x,t)$, $t\leq s\leq 0$ satisfying $P(t;x,t)=p$. Moreover, the Feynman-Kac formula
\[
\bar{f}(t,p)=\mathbb{E}[f_\mathrm{in}(P(0,p;t))]
\] 
is a classical positive solution of the Cauchy problem~\eqref{newcauchyFP} in the interval $[-T,0]$. We recall that in the theory of stochastic differential equations, equation~\eqref{newFP} is the  backward Kolmogorov equation associated to the system~\eqref{sde}. In conclusion, transforming back to the original variables $(f,\phi)$, we have found a classical solution $f(t,p)$ of~\eqref{cauchyFP} defined for all $t>0$. 
Next, we show that classical solutions satisfy the estimates~\eqref{estimatesf}. Let $\xi\in C_c^\infty([0,\infty))$ be a non-increasing function such that
\begin{align*}
\xi(r) = \begin{cases}
1 & \text{if } 0 \leq r \leq 1,\\
0 & \text{if } r \geq 2,
\end{cases}
\end{align*}
and define $\xi_n(p) = \xi \left(\frac{|p|}{n}\right)$, for $p\in \R^3$ and $n\in\mathbb{N}$, $n \geq 1$. Then $\xi_n \in C^\infty_c(\R^3)$ is a cut-off function satisfying $0 \leq \xi_n \leq 1$, $\xi_n(p) = 1$ if $|p| \leq n$, and $\xi_n(p) = 0$ if $|p| \geq 2n$. We clearly have $|\nabla _p \xi_n| \leq C/n$ and $|\Delta _p \xi_n| \leq C/n^2$. By direct computation we obtain
\begin{align}
\frac{d}{dt}\int_{\R^3}\xi_nf^q\,dp=&-q(q-1)e^{2\phi}\int_{\R^3}\xi_n f^{q-2}D^{ij}[\phi]\partial_{p^i}f\partial_{p^j}f\,dp\nonumber\\
&+e^{2\phi}\int_{\R^3}f^q\left[(\partial_{p^j}D^{ij}[\phi])\partial_{p^i}\xi_n+D^{ij}[\phi]\partial_{p^i}\partial_{p^j}\xi_n\right]dp,\label{identity}
\end{align}
for all $q\geq 1$. By the positivity of $D$, the first term in the right side of~\eqref{identity} is non-positive. From the properties of the cutoff function, the term in square brackets in the last integral satisfies
\[
[\dots]\leq \frac{C_T}{n},\quad \text{for all $t\in [0,T]$ and all $T>0$.}
\]
Hence using Gr\"onwall's inequality,  the identity~\eqref{identity} gives
\[
\|f(t)\|_{L^q(\R^3)}\leq C_T.
\]
Substituting again in~\eqref{identity} we get the inequalities
\[
\|f_\mathrm{in}\|_{L^1(\R^3)}-\frac{C_T}{n}\leq \|f(t)\|_{L^1(\R^3)}\leq \|f_\mathrm{in}\|_{L^1(\R^3)}+\frac{C_T}{n},\quad \|f(t)\|_{L^q(\R^3)}\leq \|f_\mathrm{in}\|_{L^q(\R^2)}+\frac{C_T}{n}.
\]
Taking the limit as $n\to\infty$ proves~\eqref{estimatesf}. The uniqueness statement of the proposition follows by applying the estimate on $\|f(t)\|_{L^2(\R^3)}$ to the difference of two solutions. As to the estimates~\eqref{estimatesf2}-\eqref{seconderest}, we present for brevity only a formal proof; the computations can be made rigorous by introducing the cut-off function $\xi$ as above. We then compute 
\begin{align*}
\frac{d}{dt}\int_{\R^3}(e^{2\phi}+|p| ^2)^\gamma\,f\,dp&=2\gamma e^{2\phi}\dot\phi\int_{\R^3}(e^{2\phi}+|p|^2)^{\gamma-1}\,f\,dp+e^{2\phi}\int_{\R^3}(e^{2\phi}+|p|^2)^\gamma\,\partial_{p^i}(D^{ij}[\phi]\partial_{p^j}f)\,dp\\
&\leq C(\dot{\phi})_+\int_{\R^3}(e^{2\phi}+|p|^2)^\gamma\,f\,dp+e^{2\phi}\int_{\R^3}\partial_{p^j}\big\{D^{ij}[\phi]\partial_{p^i}[(e^{2\phi}+|p|^2)^\gamma]\big\}\,f\,dp,
\end{align*}
where $(\cdot)_+$ denotes the positive part. Bounding the bracketed portion $\{...\}$ of the second term, we find 
\[
\nabla_p\cdot\{\dots\}
=4\gamma(\gamma-1/2)(e^{2\phi}+|p|^2)^{\gamma-3/2}|p|^2+6\gamma(e^{2\phi}+|p|^2)^{\gamma-1/2}\leq C e^{-\phi}(e^{2\phi}+|p|^2)^{\gamma},
\]
and thus obtain
\[
\frac{d}{dt}\int_{\R^3}(e^{2\phi}+|p| ^2)^\gamma\,f\,dp\leq C (e^{\phi}+(\dot{\phi})_+)\int_{\R^3}(e^{2\phi}+|p| ^2)^\gamma\,f\,dp,
\]
which by Gr\"onwall's inequality implies
\[
\int_{\R^3}(e^{2\phi}+|p| ^2)^\gamma\,f\,dp\leq \exp \left(C \int_0^{t}\mathcal{Q}_\phi(s)\,ds\right) \int_{\R^3}(e^{2\phi}+|p| ^2)^\gamma\,f_\mathrm{in}\,dp.
\]
As to the estimate on $\nabla_p f$, we compute
\begin{align}
\frac{d}{dt}\int_{\R^3} (e^{2\phi}+|p|^2)^\gamma\,|\nabla_pf|^2\,dp=&\ 2\gamma e^{2\phi}\dot{\phi}\int_{\R^3}(e^{2\phi}+|p|^2)^{\gamma-1}|\nabla_pf|^2\,dp\nonumber\\
&+\underbrace{2e^{2\phi}\int_{\R^3}(e^{2\phi}+|p|^2)^\gamma\nabla_pf\cdot\nabla_p(\partial_{p^i}(D^{ij}[\phi]\partial_{p^j}f))\,dp}_{(*)}.\label{temporale}
\end{align}
In the integral in $(*)$ we first integrate by parts in the variable $p^i$ and then, after straightforward calculations, we obtain 
\[
(*)=I_1+I_2+I_3+I_4,
\]
where
\begin{align*}
&I_1=-2e^{2\phi}\int_{\R^3}(e^{2\phi}+|p|^2)^\gamma D^{ij}[\phi]\partial_{p^i}\nabla_pf\cdot\partial_{p^j}\nabla_pf\,dp\\
&I_2=-4\gamma e^{2\phi}\int_{\R^3}(e^{2\phi}+|p|^2)^{\gamma} A^{jk}[\phi]\partial_{p^j}f\,\partial_{p^k}f\,dp\\
&I_3=2\gamma e^{2\phi}\int_{\R^3}\nabla_p\cdot(p\,(e^{2\phi}+|p|^2)^{\gamma-1/2})|\nabla_p f|^2
\,dp\\
&I_4=-2e^{2\phi}\int_{\R^3}(e^{2\phi}+|p|^2)^\gamma\partial_{p^i}\partial_{p^k}f(\partial_{p_k}D^{ij}[\phi])\partial_{p^j}f\,dp
\end{align*}
and
\begin{equation}\label{matrixA}
A^{j}_{\ k}[\phi]=\frac{p_i\partial_{p^k}D^{ij}[\phi]}{e^{2\phi}+|p|^2}=\frac{\delta^{j}_{\ k}|p|^2}{(e^{2\phi}+|p|^2)^{3/2}}.
\end{equation}
By the positivity of $D$ and $A$ we have $I_1+I_2\leq 0$. 
In the integral $I_3$ we compute
\[
\nabla_p\cdot(p\,(e^{2\phi}+|p|^2)^{\gamma-1/2})=3(e^{2\phi}+|p|^2)^{\gamma-1/2}+(2\gamma-1)(e^{2\phi}+|p|^2)^{\gamma-3/2}|p|^2\leq C  e^{-\phi}(e^{2\phi}+|p|^2)^\gamma
\]
and thus the integral $I_3$ is bounded by
\[
I_3\leq C e^{\phi}\int_{\R^3} (e^{2\phi}+|p|^2)^\gamma\,|\nabla_pf|^2\,dp.
\]
The integral $I_4$ requires some further work. Integrating by parts the $p^k$ derivative in $\partial_{p^k} f$, we obtain
\begin{align*}
I_4=\ &2e^{2\phi}\int_{\R^3}(e^{2\phi}+|p|^2)^\gamma (\Delta_p D^{ij}[\phi])\partial_{p^i}f\partial_{p^j}f\,dp\\
&+4\gamma e^{2\phi}\int_{\R^3}(e^{2\phi}+|p|^2)^\gamma B^{ij}[\phi]\partial_{p^i}f\,\partial_{p^j}f\,dp\\
&+2e^{2\phi}\int_{\R^3}(e^{2\phi}+|p|^2)^\gamma\partial_{p^i}f(\partial_{p_k}D^{ij}[\phi])\partial_{p^j}\partial_{p^k}f\,dp,
\end{align*}
where 
\begin{equation}\label{matrixB}
B^{ij}[\phi]=\frac{p\cdot\nabla_pD^{ij}[\phi]}{(e^{2\phi}+|p|^2)}=\frac{2p^ip^j}{(e^{2\phi}+|p|^2)^{3/2}}-\frac{|p|^2}{(e^{2\phi}+|p|^2)^{2}}D^{ij}[\phi],
\end{equation}
\[
\Delta_pD^{ij}[\phi]=\frac{1}{\sqrt{e^{2\phi}+|p|^2}}\left(2\delta^{ij}-\frac{4p^ip^j}{e^{2\phi}+|p|^2}\right)-\frac{3e^{2\phi}}{(e^{2\phi}+|p|^2)^2}D^{ij}[\phi].
\]
By the symmetry of $D$, the last integral equals $-I_4$ and thus we have obtained
\[
I_4=I_{4A}+I_{4B},
\]
where 
\[
\begin{gathered}
I_{4A}=e^{2\phi}\int_{\R^3}(e^{2\phi}+|p|^2)^\gamma (\Delta_p D^{ij}[\phi])\partial_{p^i}f\partial_{p^j}f\,dp\\
 I_{4B}=2\gamma e^{2\phi}\int_{\R^3}(e^{2\phi}+|p|^2)^\gamma B^{ij}[\phi]\partial_{p^i}f\,\partial_{p^j}f\,dp.
\end{gathered}
\]
Using the straightforward bounds 
\begin{equation}\label{estmatrix}
B^{ij}[\phi]x_ix_j\leq Ce^{-\phi}|x|^2,\qquad \Delta_pD^{ij}[\phi]x_ix_j\leq Ce^{-\phi}|x|^2,\ \text{for all $x\in\R^3$},
\end{equation}
we have
\[ 
I_4\leq Ce^{\phi}\int_{\R^3} (e^{2\phi}+|p|^2)^\gamma\,|\nabla_pf|^2\,dp.
\]
Collecting the estimates we find that the term $(*)$ in~\eqref{temporale} satisfies
\begin{equation}\label{est*}
(*)\leq  Ce^{\phi}\int_{\R^3} (e^{2\phi}+|p|^2)^\gamma\,|\nabla_pf|^2\,dp,
\end{equation}
which yields
\[
\frac{d}{dt}\int_{\R^3} (e^{2\phi}+|p|^2)^\gamma\,|\nabla_pf|^2\,dp\leq C(e^{\phi}+(\dot\phi)_+)\int_{\R^3} (e^{2\phi}+|p|^2)^\gamma\,|\nabla_pf|^2\,dp.
\]
An application of Gr\"onwall's inequality completes the proof of~\eqref{estimatesf2}. 

To prove~\eqref{seconderest} we use that $g_k=\partial_{p^k}f$ satisfies, for all $k=1,2,3$, 
\[
\partial_t g_k=e^{2\phi} \partial_{p^i}(D^{ij}\partial_{p^j}g_k)+e^{2\phi}\partial_{p^i}[(\partial_{p^k}D^{ij})g_j]
\]
and thus
\begin{align}
\frac{d}{dt}\int_{\R^3}(e^{2\phi}+|p|^2)^\gamma\nabla_p g^k\cdot\nabla_p g_k&=2\gamma\int_{\R^3}(e^{2\phi}+|p|^2)^{\gamma-1}e^{2\phi}\dot\phi\,\nabla_pg^k\cdot\nabla_pg_k\,dp\nonumber\\
&\quad +2e^{2\phi}\int_{\R^3}(e^{2\phi}+|p|^2)^\gamma\,\nabla_p g^k\cdot \nabla_p[\partial_{p^i}(D^{ij}\partial_{p^j}g_k)]\,dp\nonumber\\
&\quad +2 e^{2\phi}\int_{\R^3}(e^{2\phi}+|p|^2)^\gamma\,\nabla_p g^k\cdot\nabla_p\{\partial_{p^i}[(\partial_{p^k}D^{ij})g_j]\}\,dp\nonumber\\
& =II+III+IV.\label{seconderL2}
\end{align}
(In the rest of the proof we denote $D^{ij}[\phi]=D^{ij}$ for notational simplicity.)
The term $III$ is the same as $(*)$ in~\eqref{temporale} with $f$ replaced by $g_k$ and thus by~\eqref{est*} it satisfies the bound
\begin{equation}\label{estII}
III\leq C e^\phi\int_{\R^3}(e^{2\phi}+|p|^2)^\gamma\,\nabla_p g^k\cdot \nabla_p g_k\,dp\leq Ce^\phi\int_{\R^3}(e^{2\phi}+|p|^2)^\gamma|\nabla_p^2f|^2\,dp.
\end{equation} 
The term $II$ is bounded as
\begin{equation}\label{estsimple}
II\leq C(\dot\phi)_+\int_{\R^3}(e^{2\phi}+|p|^2)^\gamma |\nabla_p^2f|^2\,dp.
\end{equation}
Expanding the term $IV$ in~\eqref{seconderL2} we obtain
\begin{align}
IV&=2e^{2\phi}\int_{\R^3}(e^{2\phi}+|p|^2)^\gamma\,\nabla_pg^k\cdot\nabla_p(\partial_{p^i}\partial_{p^k}D^{ij})g_j\,dp\nonumber\\
&\quad +2e^{2\phi}\int_{\R^3}(e^{2\phi}+|p|^2)^\gamma(\partial_{p^i}\partial_{p^k}D^{ij})\nabla_p g^k\cdot\nabla_p g_j\,dp\nonumber\\
&\quad+2e^{2\phi}\int_{\R^3}(e^{2\phi}+|p|^2)^\gamma\,\nabla_pg^k\cdot\nabla_p(\partial_{p^k}D^{ij})\partial_{p^i}g_j\,dp\nonumber\\
&\quad+2e^{2\phi}\int_{\R^3}(e^{2\phi}+|p|^2)^\gamma(\partial_{p^k}D^{ij})\nabla_p g^k\cdot\nabla_p\partial_{p^i}g_j\,dp\nonumber\\
&=IV_1+IV_2+IV_3+IV_4.\label{estIII}
\end{align}
In $IV_4$ we integrate by parts in the $p^i$ derivative acting on $g_j$ and obtain
\begin{align}
IV_4&=-4\gamma e^{2\phi}\int_{\R^3}(e^{2\phi}+|p|^2)^{\gamma-1}(p_i\partial_{p^k}D^{ij})\nabla_p g^k\cdot\nabla_pg_j\,dp\nonumber\\
&\quad-2e^{2\phi}\int_{\R^3}(e^{2\phi}+|p|^2)^\gamma(\partial_{p^i}\partial_{p^k}D^{ij})\nabla_p g^k\cdot\nabla_pg_j\,dp\\
&\quad-2 e^{2\phi}\int_{\R^3}(e^{2\phi}+|p|^2)^\gamma(\partial_{p^k}D^{ij})\nabla_p\partial_{p^i}g^k\cdot\nabla_p g_j\,dp\nonumber\\
&=IV_{4A}+IV_{4B}+IV_{4C}.\label{III4}
\end{align}
Note that $IV_2+IV_{4B}=0$.  In $IV_{4C}$ we integrate by parts in the $p^k$  derivative within $g_k=\partial_{p^k}f$, so that
\begin{align*}
IV_{4C}&=4\gamma e^{2\phi}\int_{\R^3}(e^{2\phi}+|p|^2)^{\gamma-1} (p^k\partial_{p^k}D^{ij})\nabla_pg_i\cdot\nabla_pg_j\,dp\\
&\quad +2e^{2\phi}\int_{\R^3}(e^{2\phi}+|p|^2)^\gamma(\Delta_pD^{ij})\nabla_p g_i\cdot\nabla_p g_j
\,dp\\
&\quad+2e^{2\phi}\int_{\R^3}(e^{2\phi}+|p|^2)^\gamma(\partial_{p^k}D^{ij})\nabla_p g_i\cdot\nabla_p\partial_{p^j}g^k\,dp.
\end{align*}
By the symmetry of $D$, the third term in the right hand side of the latter equation equals $-IV_{4C}$, hence
\begin{align}
IV_{4C}&=e^{2\phi}\int_{\R^3}(e^{2\phi}+|p|^2)^\gamma(\Delta_pD^{ij})\nabla_p g_i\cdot\nabla_p g_j\nonumber\\
&\quad+2\gamma e^{2\phi}\int_{\R^3}(e^{2\phi}+|p|^2)^{\gamma-1}(p^k\partial_{p^k}D^{ij})\nabla_pg_i\cdot\nabla_pg_j\,dp.\label{III4B}
\end{align} 
Substituting~\eqref{III4B} into~\eqref{III4} and then returning to~\eqref{estIII} we obtain
\begin{align}
IV&=2e^{2\phi}\int_{\R^3}(e^{2\phi}+|p|^2)^\gamma\nabla_p(\partial_{p^i}\partial_{p^k}D^{ij})\cdot (\nabla_pg^k)g_j\,dp\nonumber\\
&\quad+2e^{2\phi}\int_{\R^3}(e^{2\phi}+|p|^2)^\gamma\nabla_p(\partial_{p^k}D^{ij})\cdot\nabla_pg^k\partial_{p^i}g_j\,dp\nonumber\\
&\quad+e^{2\phi}\int_{\R^3}(e^{2\phi}+|p|^2)^\gamma(\Delta_pD^{ij})\nabla_pg_i\cdot\nabla_pg_j\,dp\nonumber\\
&\quad -4\gamma e^{2\phi}\int_{\R^3}(e^{2\phi}+|p|^2)^{\gamma}A^j_{\ k}\nabla_pg^k\cdot\nabla_pg_j\,dp\nonumber\\
&\quad +2\gamma e^{2\phi}\int_{\R^3}(e^{2\phi}+|p|^2)^{\gamma}B^{ij}\nabla_pg_i\cdot\nabla_pg_j\,dp,\label{estIIInew}
\end{align}
where $A^j_{\ k}$ and $B^{ij}$ are given by~\eqref{matrixA},~\eqref{matrixB}. Recall that $A$ is positive definite and that the estimates~\eqref{estmatrix} hold. Furthermore, we show in the appendix that
\begin{equation}\label{boundsderD}
|\nabla_p\partial_{p^k}D^{ij}|\leq C e^{-\phi},\quad |\nabla_p(\partial_{p^i}\partial_{p^k}D^{ij})|\leq Ce^{-2\phi}.
\end{equation}
With these bounds, equation~\eqref{estIIInew} entails
\begin{align}\label{estIIIfin}
IV&\leq C\int_{\R^3}(e^{2\phi}+|p|^2)^{\gamma}|\nabla_pg_k|\,|g_j|\,dp+C e^{\phi}\int_{\R^3}(e^{2\phi}+|p|^2)^{\gamma}(|\nabla_p g_k|+|\nabla_p g_i|)|\nabla_p g_j|\,dp\nonumber\\
&\leq C\left(\int_{\R^3}(e^{2\phi}+|p|^2)^\gamma|\nabla_p^2f|^2\,dp\right)^{1/2}\left(\int_{\R^3}(e^{2\phi}+|p|^2)^\gamma|\nabla_pf|^2\,dp\right)^{1/2}\nonumber\\
&\quad+e^\phi\int_{\R^3}(e^{2\phi}+|p|^2)^\gamma|\nabla_p^2f|^2\,dp,
\end{align}
where we used H\"older's inequality in the last step. 
Substituting the bounds~\eqref{estII},~\eqref{estsimple} and~\eqref{estIIIfin} into~\eqref{seconderL2} we obtain
\begin{align*}
\frac{d}{dt}\int_{\R^3}(e^{2\phi}+|p|^2)^\gamma|\nabla_p^2f|^2\,dp&\leq C(e^\phi+(\dot\phi)_+)\int_{\R^3}(e^{2\phi}+|p|^2)^\gamma|\nabla_p^2f|^2\,dp\\
&\quad+ C\left(\int_{\R^3}(e^{2\phi}+|p|^2)^\gamma|\nabla_p^2f|^2\,dp\right)^{1/2}\left(\int_{\R^3}(e^{2\phi}+|p|^2)^\gamma|\nabla_pf|^2\,dp\right)^{1/2}
\end{align*}
and therefore 
\begin{align*}
\frac{d}{dt}\left(\int_{\R^3}(e^{2\phi}+|p|^2)^\gamma|\nabla_p^2f|^2\,dp\right)^{1/2}&\leq C(e^\phi+(\dot\phi)_+)\left(\int_{\R^3}(e^{2\phi}+|p|^2)^\gamma|\nabla_p^2f|^2\,dp\right)^{1/2}\\
&\quad+C\left(\int_{\R^3}(e^{2\phi}+|p|^2)^\gamma|\nabla_pf|^2\,dp\right)^{1/2},
\end{align*}
which by Gr\"onwall's inequality gives~\eqref{seconderest}. 
\end{proof}

\subsection{Existence}\label{exuni}
By a simple density argument we can assume that $f_\mathrm{in}\in C^2_c(\R^3)$. We fix $T>0$ and consider the sequence $(f_n,\phi_n)$ defined iteratively as follows. For $n=0$ we set $(f_0,\phi_0)=(f_\mathrm{in},\phi_\mathrm{in})$. Assuming that the pair $(f_n,\phi_n)$ is given, we define $(f_{n+1},\phi_{n+1})$ as the unique solution of the equations
\[
\partial_tf_{n+1}=e^{2\phi_n}\partial_{p^i}(D^{ij}[\phi_n]\partial_{p^j}f_{n+1}),\qquad \ddot\phi_{n+1}=-e^{2\phi_{n+1}}\int_{\R^3}\frac{f_{n+1}}{\sqrt{e^{2\phi_{n+1}}+|p|^2}}\,dp,
\]  
with initial data $f_{n+1}(0,p)=f_\mathrm{in}(p)$, $(\phi_{n+1}(0),\dot{\phi}_{n+1}(0))=(\phi_\mathrm{in},\psi_\mathrm{in})$. It follows by induction and Propositions~\ref{exiunin} and \ref{existenceFP} that the sequence $(f_n,\phi_n)$ consists of smooth functions. Moreover, by~\eqref{estimatesf},
\[
\|f_n(t)\|_{L^1(\R^3)}=\|f_\mathrm{in}\|_{L^1(\R^3)},\quad \|f_n(t)\|_{L^2(\R^3)}\leq\|f_\mathrm{in}\|_{L^2(\R^3)}
\]
and the function $\mathcal{K}_{f_n}(t)$ given by~\eqref{defK} is equibounded along the sequence $f_n$. Thus, by~\eqref{estphi}, 
\[
\|\phi_n\|_{W^{2,\infty}((0,T))}\leq C_T.
\]
We infer that  the function $\mathcal{Q}_{\phi_n}(t)$ given by~\eqref{defQ} is equibounded along the sequence $\phi_n$. Hence, by~\eqref{estimatesf2},
\[
\|\nabla_pf_n(t)\|_{L^2(\R^3)}+\int_{\R^3}|p|f_n\,dp\leq C_T,\quad \text{for all $t\in[0,T]$}.
\]
It follows that there exist 
\[
f\in L^\infty((0,T); H^1(\R^3)),\qquad \phi\in W^{2,\infty}((0,T))
\]
and a subsequence, still denoted $(f_n,\phi_n)$, such that
\[
f_n\rightharpoonup f,\ \text{in } L^2((0,T)\times\R^3),\qquad \phi_n\overset{*}{\rightharpoonup}\phi,\ \text{in } W^{2,\infty}(0,T),\quad \text{as $n\to\infty$.} 
\]
By a standard diagonal sequence argument, we can choose $(f_n,\phi_n)$ to be independent of $T>0$. Moreover
\[
f_n(t,\cdot)\rightharpoonup f(t,\cdot)\ \text{in } H^1(\R^3) \quad \text{for all $t\in [0,T]$}.
\]
By compactness, we may extract a subsequence such that $f_n(t,\cdot)$ converges strongly in $L^2(\R^3)$ and $(\phi_n,\dot\phi_n)$ converges uniformly on $[0,T]$ (which implies in particular that $\phi\in C^1$). It is clear that this convergence is strong enough to pass to the limit in the equations and conclude that $(f,\phi)$ is a solution of the spatially homogeneous VNFP system~\eqref{FPhom}-\eqref{nordstromhom}. Moreover $f\in L^{\infty}((0,\infty);L^1(\R^3)\cap L^2(\R^3))$ and it is easy to show that $f_n(t,\cdot)\rightharpoonup f(t,\cdot)$ in $L^1(\R^3)$ (up to subsequences) so that, in particular, the mass of $f$ is preserved. In fact, the sequence $f_n$ is bounded in $L^2(\R^3)$ and it is tight, because  $|p|\,f_n$ is bounded in $L^1(\R^3)$. Hence weak convergence in $L^1(\R^3)$ of $f_n(t,\cdot)$ follows by the Dunford-Pettis theorem.

\subsection{Uniform estimates and asymptotic behavior of the field}
Next we show that $|p|f\in L^\infty((0,\infty); L^1(\R^3))$ and $\nabla_pf\in L^\infty((0,\infty); L^2(\R^2))$. Moreover, we establish the estimate~\eqref{pointestfield}. We also prove that~\eqref{momsbounded} holds when the initial data satisfy~\eqref{secondmombounded}. To this purpose we first notice that since $\dot{\phi}$ is decreasing, the limit 
\[
\dot\phi_\infty=\lim_{t \to \infty}\dot\phi(t)
\]
exists. 
\begin{lemma}\label{limitphidot}
$\dot{\phi}_\infty<0.$
\end{lemma}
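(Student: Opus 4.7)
The plan is to argue by contradiction. Suppose $\dot\phi_\infty \geq 0$. Since $H_f\geq 0$ gives $\ddot\phi\leq 0$, the derivative $\dot\phi$ is non-increasing, so $\dot\phi(t)\geq\dot\phi_\infty\geq 0$ for every $t\geq 0$; integrating, $\phi(t)\geq\phi_\mathrm{in}$ and hence $e^{2\phi(t)}\geq e^{2\phi_\mathrm{in}}$ on $[0,\infty)$. Moreover, integrating $\ddot\phi=-H_f$ over $(0,\infty)$ yields
$$\int_0^\infty H_f(s,\phi(s))\,ds \;=\; \psi_\mathrm{in}-\dot\phi_\infty \;<\;\infty,$$
so $H_f$ is Lebesgue integrable in time.

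By Cauchy--Schwarz applied to $f = f^{1/2}(e^{2\phi}+|p|^2)^{-1/4}\cdot f^{1/2}(e^{2\phi}+|p|^2)^{1/4}$ together with mass conservation $\|f(t)\|_{L^1}=\|f_\mathrm{in}\|_{L^1}=:M$,
$$H_f(t)\;\geq\;\frac{e^{2\phi(t)}\,M^2}{F(t)},\qquad F(t):=\int_{\R^3}\sqrt{e^{2\phi(t)}+|p|^2}\,f(t,p)\,dp.$$
The aim is to bound $F$ suitably from above. Differentiating $F$ and integrating by parts twice in \eqref{FPhom}, the algebraic identity $D^{ij}[\phi]\,\partial_{p^i}\sqrt{e^{2\phi}+|p|^2}=p^j$, which follows directly from \eqref{defD} by writing $D^{ij}p_i/\sqrt{e^{2\phi}+|p|^2}=(e^{2\phi}p^j+p^j|p|^2)/(e^{2\phi}+|p|^2)$, collapses the second-order term to the clean formula
$$\dot F(t)\;=\;\dot\phi(t)\,H_f(t)\;+\;3\,e^{2\phi(t)}\,M.$$
Using $\dot\phi\leq\psi_\mathrm{in}$ and the $L^1$-integrability of $H_f$ in time, integrating gives
$$F(t)\;\leq\;C\;+\;3M\,\Lambda(t),\qquad \Lambda(t):=\int_0^t e^{2\phi(s)}\,ds,$$
for some constant $C>0$.

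Substituting this back into the Cauchy--Schwarz lower bound and using $\Lambda'(t)=e^{2\phi(t)}$,
$$H_f(t)\;\geq\;\frac{M^2\,\Lambda'(t)}{C+3M\,\Lambda(t)},$$
whence, integrating explicitly,
$$\int_0^\infty H_f(s)\,ds\;\geq\;\frac{M}{3}\,\log\!\left(1+\frac{3M\,\Lambda(\infty)}{C}\right).$$
Since $\phi(s)\geq\phi_\mathrm{in}$ forces $\Lambda(\infty)\geq\int_0^\infty e^{2\phi_\mathrm{in}}\,ds=+\infty$, the right-hand side equals $+\infty$, contradicting $H_f\in L^1(0,\infty)$. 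Hence $\dot\phi_\infty<0$.

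The key step, and the only place requiring care, is the integration-by-parts derivation of $\dot F$; this would be made rigorous via the cut-off function $\xi_n$ introduced in the proof of Proposition~\ref{existenceFP}, exactly as in the $L^q$ estimates \eqref{estimatesf}. The rest of the argument is an elementary ODE-style comparison; note in particular that no assumption on the sign of $\psi_\mathrm{in}$ is needed, because the conclusion is drawn from the integrability of $H_f$ alone.
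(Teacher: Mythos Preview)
Your proof is correct and follows essentially the same route as the paper: both use Cauchy--Schwarz to obtain $H_f\geq e^{2\phi}M^2/F$, both compute the same time derivative of the first-moment quantity (yielding the $3e^{2\phi}M$ term), and both conclude via the logarithmic divergence of $\int_0^t e^{2\phi}\,ds$ under the hypothesis $\dot\phi_\infty\geq 0$.

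The one noteworthy difference is how the extra term $\dot\phi\,H_f$ in $\dot F$ is handled. The paper works instead with the energy $\mathcal{E}(t)=F(t)+\tfrac{1}{2}\dot\phi^2$, for which this term cancels exactly against $\dot\phi\,\ddot\phi$, giving the clean identity $\dot{\mathcal{E}}=3e^{2\phi}M$ and hence an \emph{unconditional} pointwise bound~\eqref{estphidot} on $\dot\phi(t)$. You instead bound $\int_0^t\dot\phi\,H_f\leq\psi_{\mathrm{in}}\int_0^\infty H_f<\infty$ using the integrability of $H_f$ that already follows from the contradiction hypothesis. Both are valid; the paper's packaging is slightly slicker and yields a quantitative inequality that holds without assuming $\dot\phi_\infty\geq 0$, whereas your argument is more direct in that it never introduces the auxiliary energy.
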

\begin{proof}
Let 
\begin{equation}\label{esttemp}
 M=\|f(t)\|_{L^1(\R^3)}=\|f_\mathrm{in}\|_{L^1(\R^3)}, \quad\mathcal{E}(t)=\int_{\R^3} f\sqrt{e^{2\phi}+|p|^2}\,dp+\frac{1}{2}\dot{\phi}^2
\end{equation}
be the mass and the energy of the solution constructed in Section~\ref{exuni}.
By H\"older's inequality,
\begin{equation}
\label{MEineq}
M^2\leq\left(\int_{\R^3}\frac{f}{\sqrt{e^{2\phi}+|p|^2}}\,dp\right)\left(\int_{\R^3} f\sqrt{e^{2\phi}+|p|^2}\,dp\right)\leq \left(\int_{\R^3}\frac{f}{\sqrt{e^{2\phi}+|p|^2}}\,dp\right)\mathcal{E}(t).
\end{equation}
Now, by a direct formal computation we have
\[
\dot{\mathcal{E}}(t)=3e^{2\phi}\int_{\R^3}f\,dp,
\]
whence
\begin{equation}\label{enid}
\mathcal{E}(t)= \mathcal{E}(0)+3M\int_0^t e^{2\phi(s)}\,ds.
\end{equation}
The previous identity holds for the solution constructed in the previous section, as it follows by applying the above formal calculation to the sequence $(f_n,\phi_n)$ and then passing to the (strong) limit resulting as $n\to\infty$. 
Using~\eqref{enid} in~\eqref{MEineq}, we arrive at 
\[
\int_{\R^3} \frac{f}{\sqrt{e^{2\phi}+|p|^2}}\,dp\geq \frac{M^2}{\mathcal{E}(0)+3M\int_0^te^{2\phi(s)}ds}.
\]
Utilizing this inequality yields
\begin{align*}
\ddot{\phi}&=-e^{2\phi}\int_{\R^3}\frac{f}{\sqrt{e^{2\phi}+|p|^2}}\,dp\leq -\frac{M^2e^{2\phi}}{\mathcal{E}(0)+3M\int_0^te^{2\phi(s)}ds}\\
&=-\frac{M}{3}\frac{d}{dt}\log\left[\mathcal{E}(0)+3M\int_0^te^{2\phi(s)}ds\right].
\end{align*}
Whence
\begin{equation}\label{estphidot}
\dot{\phi}(t)\leq \dot{\phi}(0)-\frac{M}{3}\log\left[\mathcal{E}(0)+3M\int_0^te^{2\phi(s)}ds\right]+\frac{M}{3}\log{\mathcal{E}(0)}.
\end{equation}
If $\dot{\phi}_\infty\geq 0$, then $\dot{\phi}$ is positive for all $t\in [0,\infty)$. Hence the right side of~\eqref{estphidot} tends to $-\infty$ as $t\to\infty$, a contradiction. Thus $\dot{\phi}_\infty<0$ must hold. 
\end{proof}
The previous lemma easily yields the desired estimates. In fact, since $\dot\phi_\infty<0$ and $\dot\phi$ is decreasing, there exists $t_0\geq 0$ such that $\dot{\phi}(t)<\dot{\phi}(t_0)<0$, for all $t\geq t_0$. Hence
\[
\phi(t)=\phi(t_0)+\int_{t_0}^t\dot{\phi}(s)\,ds\leq \left ( \phi(t_0) + \vert \dot{\phi}(t_0) \vert t_0 \right ) - \vert \dot{\phi}(t_0) \vert t,
\] 
and therefore $\phi(t)\leq C-\beta\,t$ holds for some $\beta,C>0$. Using this within~\eqref{estfield} yields~\eqref{pointestfield}. Finally, since $\mathcal{Q}_\phi(t)=e^{\phi(t)}$, for $t \geq t_0$, we have
\[
\int_0^\infty \mathcal{Q}_\phi(t)\,dt\leq C\left(1+\int_{t_0}^\infty e^{-\beta\,t}\,dt\right)<C,
\]
and thus the estimates $|p|f\in L^\infty((0,\infty); L^1(\R^3))$, $\nabla_pf\in L^\infty((0,\infty);L^2(\R^3))$ and \eqref{momsbounded} follow from~\eqref{estimatesf2}-\eqref{seconderest}. 

\subsection{Non-vanishing property}
Since $f$ is uniformly bounded in $L^1(\R^3)\cap L^2(\R^3)$, the estimate~\eqref{fnovanish} follows if we prove that 
\begin{equation}\label{claimtoprove}
\|f(t)\|_{L^q(\R^3)}\geq C, \quad\text{for some $q\in (1,2)$}.
\end{equation} 
This is the so called $p,q,r$-Theorem (see~\cite{LL}). To establish~\eqref{claimtoprove} we first note that for all $R>0$
\begin{align*}
0<M&=\int_{\R^3}f\,dp\leq \int_{|p|\leq R}f\,dp+\frac{1}{R}\int_{|p|\geq R}|p|f\,dp\\
&\leq (4\pi)^{1-\frac{1}{q}}\|f(t)\|_{L^q(\R^3)}R^{3-\frac{3}{q}}+\frac{1}{R}\int_{\R^3}|p|f\,dp.
\end{align*}
Optimize the previous inequality, we choose
\[
R=\left[\frac{(4\pi)^{\frac{1}{q}-1}\int_{\R^3}|p|f\,dp}{(3-\frac{3}{q})\|f(t)\|_{L^q(\R^3)}}\right]^{\frac{q}{4q-3}}
\] 
and by doing so we obtain the estimate
\[
M\leq C\|f(t)\|_{L^q(\R^3)}^{\frac{q}{4q-3}}\left(\int_{\R^3}|p|f\,dp\right)^{\frac{3(q-1)}{4q-3}}.
\] 
Because $\int_{\R^3}|p|f\,dp\leq C$,~\eqref{claimtoprove} follows.

\subsection{Uniqueness}
Finally, we prove the uniqueness statement of Theorem~\ref{global}. We do so by deriving a homogenous Gr\"onwall's type inequality on the difference of two solutions with the same initial data. For brevity we limit ourself to a formal derivation assuming all the regularity of solutions necessary for the computations which follow.  However, after regularizing with a mollifying test function $\xi \in C_c^\infty((0,T)\times \R^3)$ of the form $\xi(t,p) = \theta(t) \mu(p)$ for an appropriate choice of $\theta$ and $\mu$, one may work with only the proven regularity of solutions and make the proof completely rigorous (an example of an application of this argument can be found for instance in~\cite{bdol}). Let $(f_i,\phi_i)$, $i=1,2$, be two regular solutions with the same initial data. We let $\delta > 1/2$ be given as in the statement of the theorem and compute
\begin{align*}
\frac{d}{dt}\int_{\R^3} (1+ | p|^2 )^\delta(f_1-f_2)^2\,dp&=2\int_{\R^3}(1+ | p|^2 )^\delta (f_1-f_2) \\
&\quad \times \big[e^{2\phi_1}\partial_{p^i}(D^{ij}[\phi_1]\partial_{p^j}f_1)-e^{2\phi_2}\partial_{p^i}(D^{ij}[\phi_2]\partial_{p^j}f_2)\big]\,dp\\
&=2e^{2\phi_1}\int_{\R^3}(1+ | p|^2 )^\delta(f_1-f_2)\partial_{p^i}(D^{ij}[\phi_1]\partial_{p^j}(f_1-f_2))\,dp\\
&\quad +2\int_{\R^3}(1+ | p|^2 )^\delta(f_1-f_2)\partial_{p^i}\big[(e^{2\phi_1}D^{ij}[\phi_1]-e^{2\phi_2}D^{ij}[\phi_2])\partial_{p^j}f_2\big]\,dp\\
&=I_1+I_2.
\end{align*}
Integrating by parts and using the positivity of $D$, the first integral satisfies
\begin{align*}
I_1&=-2e^{2\phi_1}\int_{\R^3}(1+ | p|^2 )^\delta D^{ij}[\phi_1]\partial_{p^i}(f_1-f_2)\partial_{p^j}(f_1-f_2)\,dp\\
&\quad -2\delta e^{2\phi_1}\int_{\R^3}(1+ | p|^2 )^{\delta-1}p_i(f_1-f_2) D^{ij}[\phi_1]\partial_{p^j}(f_1-f_2)\,dp\\
& \leq -\delta e^{2\phi_1}\int_{\R^3}(1+ | p|^2 )^{\delta-1}p_i  D^{ij}[\phi_1]\partial_{p^j}\left [(f_1-f_2)^2 \right ]\,dp.
\end{align*}
Finally, integrating by parts again and using the properties of $D^{ij}$ (see the appendix), we find
$$I_1 \leq C_T \Vert (1 + |p|^2)^\frac{\delta}{2} (f_1 - f_2)(t) \Vert_{L^2(\mathbb{R}^3)}^2.$$
In the second integral we apply the bounds
\begin{equation}
\label{Ddiff}
|e^{2\phi_1}D^{ij}[\phi_1]-e^{2\phi_2}D^{ij}[\phi_2]|\leq C_T\sqrt{1+|p|^2}\,|\phi_1-\phi_2|,
\end{equation}
and
\begin{equation}
\label{Ddiffderiv}
\left |\partial_{p^i}\left ( e^{2\phi_1}D^{ij}[\phi_1]-e^{2\phi_2}D^{ij}[\phi_2] \right )\right |\leq C_T\,|\phi_1-\phi_2|,
\end{equation}
which follow by straightforward estimates (see the appendix). With this, we find
\begin{align*}
I_2& \leq C_T | \phi_1 - \phi_2 | \int_{\R^3} (1+ | p|^2 )^\delta |f_1-f_2| \left ( |\nabla_p f_2 | + \sqrt{1+ | p|^2}  |\nabla^2_p f_2 | \right ) \,dp\\
&\leq C_T|\phi_1-\phi_2| \left (\int_{\R^3}(1+|p|^2)^\delta |f_1-f_2|^2 \, dp \right )^{1/2}\\
&\quad \times  \left [ \left (\int_{\R^3}(1+|p|^2)^\delta |\nabla_p f_2 |^2 \, dp \right )^{1/2} + \left (\int_{\R^3}(1+|p|^2)^{\delta+1}  |\nabla^2_p f_2 |^2 \, dp\right )^{1/2} \right ] \\
& \leq C_T|\phi_1-\phi_2| \cdot  \Vert (1+|p|^2)^\frac{\delta}{2} (f_1-f_2)(t) \Vert_{L^2(\mathbb{R}^3)}\\
& \leq C_T \left (|\phi_1-\phi_2|^2 +  \Vert (1+|p|^2)^\frac{\delta}{2} (f_1-f_2)(t) \Vert^2_{L^2(\mathbb{R}^3)} \right )
\end{align*}
where we used the fact that $(1+|p|^2)^{\delta/2} \nabla_p f_2\in L^{\infty}((0,T); L^2(\R^3))$ and $(1+|p|^2)^{(\delta+1)/2} \nabla^2_p f_2\in L^{\infty}((0,T); L^2(\R^3))$, as well as  H\"older's and Young's inequality. Combining $I_1$ and $I_2$, we have the estimate
$$\| (1 + | p |^2)^{\frac{\delta}{2}} (f_1-f_2)(t)\|_{L^2(\R^3)}^2\leq C_T \left ( \int_0^t \| (1 + | p |^2)^{\frac{\delta}{2}} (f_1-f_2)(s)\|_{L^2(\R^3)}^2 \, ds +  \|\phi_1-\phi_2\|^2_{L^\infty((0,t))} \right ).$$
Invoking Gr\"{o}nwall's inequality and taking the square root, we find
\begin{equation}\label{deltaf}
\| (1 + | p |^2)^{\frac{\delta}{2}} (f_1-f_2)(t)\|_{L^2(\R^3)} \leq C_T  \|\phi_1-\phi_2\|_{L^\infty((0,t))}.
\end{equation}
Next, we compute 
\begin{align*}
\phi_1-\phi_2&=-\int_0^t\int_0^s\int_{\R^3}\left(\frac{f_1e^{2\phi_1}}{\sqrt{e^{2\phi_1}+|p|^2}}-\frac{f_2e^{2\phi_2}}{\sqrt{e^{2\phi_2}+|p|^2}}\right)dp\,d\tau\,ds\\
&=\int_0^t\int_0^s\int_{\R^3}f_1\left(\frac{e^{2\phi_1}}{\sqrt{e^{2\phi_1}+|p|^2}}-\frac{e^{2\phi_2}}{\sqrt{e^{2\phi_2}+|p|^2}}\right)dp\,d\tau\,ds\\
& \quad + \  \int_0^t\int_0^s\int_{\R^3}\frac{e^{2\phi_2}}{\sqrt{e^{2\phi_2}+|p|^2}}(f_1-f_2)\,dp\,d\tau\,ds\\
&=I_3+I_4.
\end{align*}
In the first integral we simply use the Mean Value Theorem so that
\[
\left|\frac{e^{2\phi_1}}{\sqrt{e^{2\phi_1}+|p|^2}}-\frac{e^{2\phi_2}}{\sqrt{e^{2\phi_2}+|p|^2}}\right|\leq C_T|\phi_1-\phi_2|.
\]
Therefore, as $f_1\in L^\infty((0,T);L^1(\R^3))$,
\[
I_3\leq C_T\int_0^t\|\phi_1-\phi_2\|_{L^\infty((0,s))}\,ds.
\]
For $I_4$ we use H\"older's inequality, so that
\begin{align*}
\int_{\R^3}\frac{(f_1-f_2)}{\sqrt{e^{2\phi_2}+|p|^2}}\,dp &= \int_{\R^3} (e^{2\phi_2}+|p|^2)^{-\frac{1+\delta}{2}} \left ( (f_1-f_2) (e^{2\phi_2}+|p|^2)^{\frac{\delta}{2}} \right )  \,dp\\
& \leq \left (\int_{\R^3} (e^{2\phi_2}+|p|^2)^{-(1+\delta)} \, dp \right )^{1/2} \left (\int_{\R^3}  |f_1-f_2|^2 (e^{2\phi_2}+|p|^2)^\delta \,dp \right )^{1/2}\\
& \leq C\|(e^{2\phi_2}+|p|^2)^\frac{\delta}{2} (f_1-f_2)(t)\|_{L^2(\R^3)}\\
& \leq C_T\|(1+|p|^2)^\frac{\delta}{2} (f_1-f_2)(t)\|_{L^2(\R^3)}.
\end{align*}
Hence, collecting the estimates on $I_3$ and $I_4$ we obtain
\begin{equation}\label{deltaphi}
|(\phi_1-\phi_2)(t)|\leq C_T\int_0^t\Big(\|\phi_1-\phi_2\|_{L^\infty((0,s))}+\sup_{\tau \in (0,s)}\|(1+|p|^2)^\frac{\delta}{2} (f_1-f_2)(\tau)\|_{L^2(\R^3)}\Big)ds.
\end{equation}
Using~\eqref{deltaf} within~\eqref{deltaphi} we have, for all $t \in [0,T)$, 
\[
\|\phi_1-\phi_2\|_{L^\infty((0,t))}\leq C_T\int_0^t \|\phi_1-\phi_2\|_{L^\infty((0,s))} \,ds
\]
and conclude that $\phi_1=\phi_2$ and $f_1=f_2$ a.e. on $[0,T]\times\R^3$, for all $T>0$.

\section{Long time limit of the particle density in the ultra-relativistic case}
\label{longtime}
Our final purpose is to derive an explicit formula for the long time limit of solutions to~\eqref{ultrarelFPhom}.  
The results in this section require $\phi$ to satisfy $\phi\to -\infty$, as $t\to -\infty$, and
\[
\int_0^\infty e^{2\phi(t)}\,dt<\infty,
\]
which holds of course for the solutions of the VNFP system considered in the previous section.
To begin with we consider the ultra-relativistic Fokker-Planck equation which arises by setting $\phi \equiv -\infty$, or $e^{2\phi} \equiv 0$, within the diffusion matrix $D[\phi]$, namely
\begin{equation}\label{ultra}
\partial_t g=\partial_{p^i}\left(D^{ij}[-\infty]\partial_{p^j}g\right),\quad t>0,\ p\in\R^3,
\end{equation}
where $$D^{ij}[-\infty]=\frac{p^ip^j}{|p|}.$$
This is motivated by the fact that $e^{2\phi(t)} \to 0$ as $t \to \infty$, and hence one expects the asymptotic behavior of the density $f$ to mimic that of a solution to the reduced equation.

\begin{proposition}
\label{P1}
Let $g_\mathrm{in}\in L^1(\R^3)$ be given with $g_\mathrm{in}(p) \geq 0$ and let $g_\mathrm{in}(p)=\mathfrak{g}_\mathrm{in}(q,\omega)$ be the representation of $g_\mathrm{in}$ in spherical coordinates, where $q=|p|>0$, $\omega=p/q\in S^2$. There exists a unique global solution $g\in L^\infty((0,\infty);L^1(\R^3))$ of~\eqref{ultra} such that $g(0,p)=g_\mathrm{in}(p)$, which is given by $g(t,p)=\mathfrak{g}(t,q,\omega) \geq 0$, where 
\begin{equation}\label{ultrasol}
\mathfrak{g}(t,q,\omega)=\frac{e^{-\frac{q}{t}}}{tq}\int_0^\infty \mathfrak{g}_\mathrm{in}(z,\omega) z e^{-\frac{z}{t}}\,\mathcal{I}_2\left[2\frac{\sqrt{q}}{t}\sqrt{z}\right] dz,
\end{equation} 
and $\mathcal{I}_\alpha[x]$ denotes the $\alpha$th modified Bessel function of the first kind \cite{AS}. %
Moreover, if $g_\mathrm{in}\in L^\gamma(\R^3)$, then $g(t,\cdot)\in L^\gamma(\R^3)$ and we have the estimate
\begin{equation}\label{lpultra}
\|g(t)\|_{L^\gamma(\R^3)}\leq \|g_\mathrm{in}\|_{L^\gamma(\R^3)},
\end{equation}
for any $\gamma \in [1,\infty]$, where the equality holds for $\gamma=1$.
\end{proposition}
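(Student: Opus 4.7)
The plan is to exploit the fact that the matrix $D^{ij}[-\infty]=p^ip^j/|p|$ is rank one and aligned with the radial direction, so diffusion under~\eqref{ultra} acts only on the radius $q=|p|$ while the angle $\omega=p/|p|$ is preserved as a parameter. Writing out
\[
\partial_{p^i}\!\left(\frac{p^ip^j}{|p|}\partial_{p^j}g\right)=\frac{3p^j}{|p|}\partial_{p^j}g+\frac{p^ip^j}{|p|}\partial_{p^i}\partial_{p^j}g,
\]
and using $p^j\partial_{p^j}g=q\partial_q\mathfrak{g}$ together with $p^ip^j\partial_{p^i}\partial_{p^j}g=q^2\partial_q^2\mathfrak{g}$, equation~\eqref{ultra} reduces in spherical coordinates to
\[
\partial_t\mathfrak{g}=q\,\partial_q^2\mathfrak{g}+3\,\partial_q\mathfrak{g},\qquad q>0,\ \omega\in S^2,
\]
with $\omega$ appearing only through the initial datum. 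The substitution $y=2\sqrt{q}$ then converts the right-hand side into $\partial_y^2+(5/y)\partial_y$, which is precisely the radial part of the Laplacian on $\R^6$. Consequently $u(t,y,\omega):=\mathfrak{g}(t,y^2/4,\omega)$ satisfies the heat equation on $\R^6$ with initial datum radial in the $\R^6$ variable.

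From here I would apply the Gauss kernel on $\R^6$, write the spatial integral in six-dimensional spherical coordinates, and use the identity
\[
\int_{S^5}e^{a\cos\theta}\,d\sigma=(2\pi)^3\,a^{-2}\,\mathcal{I}_2(a),\qquad a>0,
\]
to evaluate the angular integral. Reverting the substitution $q=y^2/4$ and simplifying constants yields~\eqref{ultrasol} directly. Positivity of $g$ is immediate from the positivity of the Gauss kernel (equivalently, $\mathcal{I}_2>0$ on $(0,\infty)$), while mass preservation---the $\gamma=1$ equality in~\eqref{lpultra}---follows from the divergence form of~\eqref{ultra} together with stochastic completeness of the heat semigroup on $\R^6$.

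Uniqueness and the remaining $L^\gamma$ bounds both rest on the positive semidefiniteness of $D[-\infty]$. For uniqueness, if $g_1,g_2\in L^\infty((0,T);L^1)$ share the same initial datum, I would multiply the equation for $g_1-g_2$ by $(g_1-g_2)$ (after a cutoff in $p$ as in the proof of Proposition~\ref{existenceFP}) to obtain
\[
\frac{d}{dt}\int_{\R^3}(g_1-g_2)^2\,dp=-2\int_{\R^3}\frac{p^ip^j}{|p|}\partial_{p^i}(g_1-g_2)\partial_{p^j}(g_1-g_2)\,dp\leq 0,
\]
and conclude $g_1=g_2$. The estimate~\eqref{lpultra} for $\gamma\in(1,\infty)$ follows analogously by testing against $g^{\gamma-1}$, while the $\gamma=\infty$ bound is either read off~\eqref{ultrasol} by a direct kernel estimate or obtained by passing to the limit $\gamma\to\infty$.

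The chief technical point will be verifying that the kernel representation~\eqref{ultrasol} is well defined and belongs to $L^\infty((0,\infty);L^1(\R^3))$ from only the assumption $g_\mathrm{in}\in L^1$. The asymptotics $\mathcal{I}_2(x)\sim x^2/8$ as $x\to 0^+$ and $\mathcal{I}_2(x)\sim e^x/\sqrt{2\pi x}$ as $x\to\infty$ interact with the exponentials $e^{-q/t}e^{-z/t}$ and the Jacobian $z^2\,dz\,d\omega$ arising from Lebesgue measure on $\R^3$ in exactly the right way; this convergence and the $L^1$-boundedness are equivalent, via the change of variables, to the standard $L^1$-contractivity of the heat flow on $\R^6$, which is the content I would invoke to close the argument. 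A secondary issue is the formal nature of the integrations by parts in the energy and $L^\gamma$ estimates, which would be handled by the same mollification procedure used in the proof of Proposition~\ref{existenceFP}.
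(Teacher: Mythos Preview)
Your proposal is correct and follows essentially the same route as the paper: the radial reduction to $q\partial_q^2+3\partial_q$, the substitution $q=y^2/4$ (equivalently $y=2\sqrt{q}$) to the six-dimensional radial heat equation, the Gauss kernel computation with the Bessel identity for the $S^5$ angular integral, and the $L^2$ energy argument for uniqueness and $L^\gamma$ contraction are exactly the steps the paper carries out. The only cosmetic difference is that the paper first verifies $\|g(t)\|_{L^1}=\|g_\mathrm{in}\|_{L^1}$ by direct computation from the formula~\eqref{ultrasol} before passing to smooth data, whereas you phrase this as $L^1$-contractivity of the heat semigroup on $\R^6$.
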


\begin{proof}
First we observe that the function $g(t,p)=\mathfrak{g}(t,q,\omega)$ given by~\eqref{ultrasol} belongs to $L^\infty((0,\infty); L^1(\R^3))$. In fact we have
\begin{align*}
\int_{\R^3}g(t,p)\,dp&=\int_{S^2}\int_0^\infty q^2\mathfrak{g}(t,q,\omega)\,dq\,d\omega\\
&=\frac{1}{t}\int_{S^2}\int_0^\infty e^{-\frac{z}{t}}z\,\mathfrak{g}_\mathrm{in}(z,\omega)\left(\int_0^\infty e^{-\frac{q}{t}}q\,\mathcal{I}_2\left[2\frac{\sqrt{z}}{t}\sqrt{q}\right]dq\right) dz\,d\omega.
\end{align*}
The integral within round brackets equals $e^{z/t}tz$, and thus
\begin{equation}\label{ultrasoll1}
\int_{\R^3}g(t,p)\,dp=\int_{S^2}\int_0^\infty z^2\mathfrak{g}_\mathrm{in}(z,\omega)\,dz\,d\omega,\quad \text{i.e., }\ \|g(t)\|_{L^1(\R^3)}=\|g_\mathrm{in}\|_{L^1(\R^3)}.
\end{equation}
Similarly one can prove that $g(t,p)$ given by~\eqref{ultrasol} satisfies the estimate~\eqref{lpultra}.
By a simple approximation argument, it suffices to show that~\eqref{ultrasol} is the unique solution of~\eqref{ultra} with smooth initial data. To this purpose we note that the operator 
\[
\mcL u = \partial_{p^i}\left(\frac{p^ip^j}{|p|} \partial_{p^j} u \right)
\]
is purely radial. In fact, the expression of $\mathcal{L}$ in spherical coordinates is given by
\[
\mathcal{L}=q\partial_q^2+3\partial_q.
\]
We may therefore treat the angular variables as constant parameters, i.e., we may fix $\omega\in S^2$, define $v_\mathrm{in}^\omega(q)=\mathfrak{g}_\mathrm{in}(q,\omega)$ and look for the solution $v^\omega(t,q)$ of
\begin{equation}\label{ultra2}
\partial_t v=q\partial_q^2v+3\partial_qv,\quad v^\omega(0,q)=v_\mathrm{in}^\omega(q).
\end{equation}
The solution $g(t,p)=\mathfrak{g}(t,q,\omega)$ of our original problem is then given by $\mathfrak{g}(t,q,\omega)=v^\omega(t,q)$. Our next observation is that the function $\mathfrak{u}(t,r)$ defined by
\[
\mathfrak{u}(t,r)=v^\omega(t,\frac{r^2}{4})
\]
solves the spherically symmetric heat equation in six dimensions, namely
\begin{equation}\label{heatsphsym}
\partial_t \mathfrak{u}=\partial_r^2 \mathfrak{u}+\frac{5}{r}\partial_r\mathfrak{u}.
\end{equation}
Hence, we have reduced the problem to finding the solution of~\eqref{heatsphsym}.
Let $u(t,x)$ be the solution of the Cauchy problem
\begin{subequations}\label{cauchyHEAT}
\begin{align}
&\partial_tu =\Delta u,\quad t>0,\ x\in\R^6,\label{d-heat}\\
& u(0,x)=u_\mathrm{in}(x),\quad x\in\R^6,
\end{align}
\end{subequations}
that is
\begin{equation}\label{solheat}
u(t,x)=\frac{1}{(4\pi t)^{3}}\int_{\R^6}e^{-\frac{|x-y|^2}{4t}}u_\mathrm{in}(y)\,dy.
\end{equation}
Assuming spherical symmetry, i.e., $u(t,x)=\mathfrak{u}(t,r)$,  $u(0,x)=\mathfrak{u}_\mathrm{in}(r)$, with $r=|x|$, and passing to hyperspherical coordinates in the integral on the right  side of~\eqref{solheat} we obtain
\begin{align}
\mathfrak{u}(t,r)&=\frac{e^{-\frac{r^2}{4t}}}{(4\pi t)^{3}}\int_{\R^6}e^{-\frac{(|y|^2-2y\cdot x)}{4t}}\mathfrak{u}_\mathrm{in}(|y|)\,dy\nonumber\\
&=\frac{8\pi^2}{3}\frac{e^{-\frac{r^2}{4t}}}{(4\pi t)^{3}}\int_0^\infty \mathfrak{u}_\mathrm{in}(s)\,e^{-\frac{s^2}{4t}}s^5\int_0^\pi\exp\left(\frac{r s\cos\theta}{2 t}\right)\sin^4\theta\, d\theta\,ds.\label{tempou}
\end{align}
Evaluating the angular integral gives
\[
\int_0^\pi\exp\left(\frac{r s\cos\theta}{2 t}\right)\sin^4\theta\,d\theta=12\pi\left(\frac{t}{rs}\right)^2\mathcal{I}_2\left[\frac{rs}{2t}\right],
\]
where $\mathcal{I}_\alpha[x]$ denotes the $\alpha$th modified Bessel function of the first kind \cite{AS}. Substituting this expression into~\eqref{tempou} we obtain
\begin{equation}\label{solheateqsphsym}
\mathfrak{u}(t,r)=\frac{1}{2}\frac{e^{-\frac{r^2}{4t}}}{r^2t}\int_0^\infty
\mathfrak{u}_\mathrm{in}(s)e^{-\frac{s^2}{4t}}s^3 \,\mathcal{I}_2\left[\frac{rs}{2t}\right]ds.
\end{equation}
Hence the solution of~\eqref{ultra2} is given by
\begin{equation}\label{solultra}
v^\omega(t,q)=\frac{e^{-\frac{q}{t}}}{tq}\int_0^\infty v^\omega_\mathrm{in}(z)z e^{-\frac{z}{t}}\,\mathcal{I}_2\left[2\frac{\sqrt{q}}{t}\sqrt{z}\right]\,dz,
\end{equation}
which is~\eqref{ultrasol}. 

\end{proof}

\begin{figure}
\begin{center}
\includegraphics[width=0.6\textwidth]{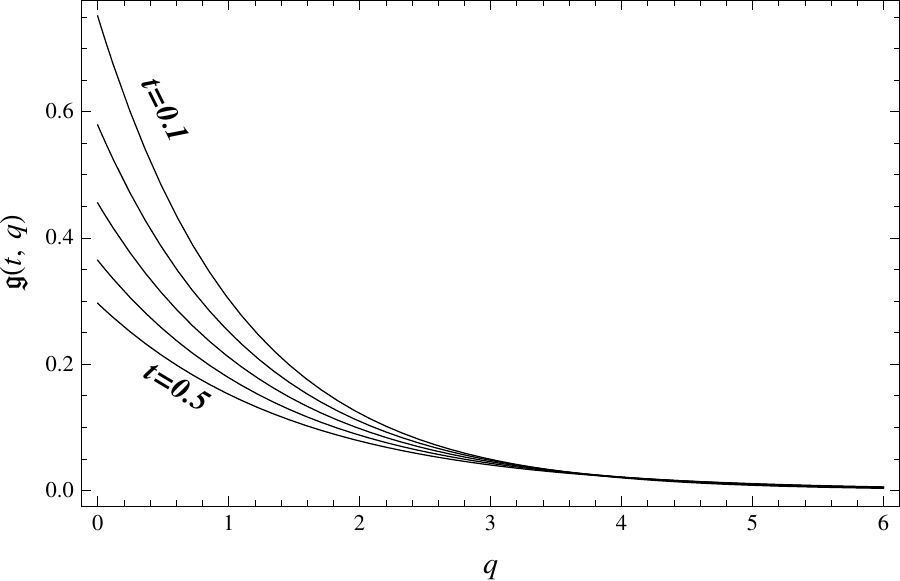}	
\caption{Numerical depiction of a spherically symmetric solution $\mathfrak{g}(t,q)$ of the ultra-relativistic Fokker-Planck equation~\eqref{ultra}. The picture shows the sections $t=const.$ of $\mathfrak{g}$ from $t=0.1$ (top curve) until $t=0.5$ (bottom curve). The initial datum is taken to be $\mathfrak{g}_\mathrm{in}(q)=e^{-q}$.}
\label{figultra}
\end{center}
\end{figure}

We remark, in particular, that when the initial datum is spherically symmetric, i.e., $g_\mathrm{in}(p)=\mathfrak{g}_\mathrm{in}(q)$, the solution of~\eqref{ultra} is also spherically symmetric, i.e., $g(t,p)=\mathfrak{g}(t,q)$, where
\begin{equation}\label{ultrasolss}
\mathfrak{g}(t,q)=\frac{e^{-\frac{q}{t}}}{tq}\int_0^\infty \mathfrak{g}_\mathrm{in}(z) z e^{-\frac{z}{t}}\,\mathcal{I}_2\left[2\frac{\sqrt{q}}{t}\sqrt{z}\right] dz.
\end{equation}
Figure~\ref{figultra} contains a numerical depiction of~\eqref{ultrasolss} for a few specific times and with the initial datum $\mathfrak{g}_\mathrm{in}(q)=e^{-q}$. 

With the preceding result, we can answer the analogous question for the ultra-relativistic equation with a scalar field using a simple change of variables.  

\begin{corollary}
\label{C1}
The solution of 
\begin{equation}\label{cappo}
\partial_t h= e^{2\phi} \partial_{p^i}\left(\frac{p^ip^j}{|p|}\partial_{p^j}h\right),
\end{equation}
with initial datum $h(0,p)=h_\mathrm{in}(p) \geq 0$ and $h_\mathrm{in} \in L^1(\mathbb{R}^3)$ is given by
\begin{equation}\label{rescaled}
h(t,p)=g(\tau(t),p),
\end{equation}
where 
\[
\tau(t)=\int_0^t e^{2\phi(s)}\,ds,
\]
and where $g(t,p)$ is the solution of~\eqref{ultra} with the same initial datum.

\end{corollary}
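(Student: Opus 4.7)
The plan is to verify the result by direct substitution using the chain rule, leveraging the fact that $\tau$ is a monotone reparametrization of time. Concretely, I set $h(t,p) := g(\tau(t),p)$ where $g$ is the solution to~\eqref{ultra} furnished by Proposition~\ref{P1} with $g(0,p) = h_\mathrm{in}(p)$, and $\tau(t) = \int_0^t e^{2\phi(s)}\,ds$. Note that $\tau$ is $C^1$ with $\dot\tau(t) = e^{2\phi(t)} > 0$, so $\tau:[0,\infty)\to [0,\tau_\infty)$ is a strictly increasing smooth bijection onto its image. In particular $\tau(0)=0$, which immediately gives the correct initial datum $h(0,p) = g(0,p) = h_\mathrm{in}(p)$.

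Next I would differentiate in $t$ via the chain rule, writing
\[
\partial_t h(t,p) = \dot\tau(t)\,(\partial_t g)(\tau(t),p) = e^{2\phi(t)}\,(\partial_t g)(\tau(t),p).
\]
Since $g$ solves the ultra-relativistic Fokker-Planck equation~\eqref{ultra}, at time $\tau(t)$ we have
\[
(\partial_t g)(\tau(t),p) = \partial_{p^i}\!\left(\frac{p^i p^j}{|p|}\partial_{p^j} g(\tau(t),p)\right) = \partial_{p^i}\!\left(\frac{p^i p^j}{|p|}\partial_{p^j} h(t,p)\right),
\]
because the spatial derivatives commute with the time evaluation. Multiplying by $e^{2\phi(t)}$ reproduces exactly~\eqref{cappo}. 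The regularity hypotheses on $\phi$ (in particular $\phi\in C^1$ and $\tau_\infty<\infty$) ensure that $g(\tau(t),\cdot)$ inherits the required regularity from $g$, and $h\in L^\infty((0,\infty);L^1(\R^3))$ follows from the corresponding bound in Proposition~\ref{P1} together with $\|h(t)\|_{L^1} = \|g(\tau(t))\|_{L^1} = \|h_\mathrm{in}\|_{L^1}$.

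For uniqueness, I would reverse the construction: given any nonnegative $L^1$-solution $\tilde h$ of~\eqref{cappo}, define $\tilde g(s,p) := \tilde h(\tau^{-1}(s),p)$ for $s\in [0,\tau_\infty)$. The same chain-rule computation run backwards shows that $\tilde g$ solves~\eqref{ultra} on $[0,\tau_\infty)$ with initial datum $h_\mathrm{in}$, so the uniqueness statement of Proposition~\ref{P1} forces $\tilde g = g$ on $[0,\tau_\infty)$, hence $\tilde h = h$. There is no genuine obstacle here; the only mild subtlety is checking that the differential-operator identity holds pointwise (or in the distributional sense appropriate to the class of solutions considered), which is immediate since the spatial operator on the right-hand side is independent of $t$ and acts only on $p$.
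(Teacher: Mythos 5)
Your proposal is correct and follows essentially the same route as the paper: a monotone time reparametrization $\tau(t)=\int_0^t e^{2\phi(s)}\,ds$ combined with the chain rule, reducing~\eqref{cappo} to the autonomous equation~\eqref{ultra} and invoking Proposition~\ref{P1}. The paper's proof is just a terser version of the same change of variables, so no further comparison is needed.
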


\begin{proof}

The result follows by rescaling time to account for the gravitational potential.
Let $$\tau(t) = \int_0^t e^{2\phi(s)} \ ds$$ and make the change of variables $h(t, p) = g(\tau(t), p)$.  Then, $\partial_t h = e^{2\phi} \partial_\tau g$, and the unknown function $g(\tau, p)$ satisfies the parabolic equation
$$ \partial_\tau g = \partial_{p^i} \left ( \frac{p^i p^j}{\vert p\vert} \partial_{p^j} g \right ).$$  
Hence any solution must be of the form (\ref{rescaled}).  

\end{proof}

Before proving our final result it is convenient to introduce some notation. Given $\phi=\phi(t)$ such that $e^{2\phi}\in L^1((0,\infty))$, and a function $u=u(p)$ with the representation $u(p)=\mathfrak{u}(q,\omega)$ in spherical coordinates, we define
\begin{subequations}\label{Tdef}
\begin{equation}
\mathcal{T}_\phi[u]=\frac{e^{-\frac{q}{\tau_\infty}}}{\tau_\infty q}\int_0^\infty \mathfrak{u}(z,\omega) z e^{-\frac{z}{\tau_\infty}}\,\mathcal{I}_2\left[2\frac{\sqrt{q}}{\tau_\infty}\sqrt{z}\right] dz,
\end{equation}
where 
\begin{equation}
\tau_\infty=\|e^{2\phi}\|_{L^1(\R^3)}.
\end{equation}
\end{subequations}
We are now in the position to derive the long time limit of solutions $h$ to~\eqref{cappo}. To avoid the need for technical estimates on Bessel functions in Lebesgue spaces, we choose to study the limit in the $L^\infty$ norm.
\begin{theorem}\label{hasym}
Let $\phi=\phi(t)$ be such that $e^{2\phi}\in L^1((0,\infty))$ and $h(t,p)\geq 0$ be the solution of 
\[
\partial_t h= e^{2\phi} \partial_{p^i}\left(\frac{p^ip^j}{|p|}\partial_{p^j}h\right)
\]
with initial datum $h(0,p)=h_\mathrm{in}(p)=\mathfrak{h}_\mathrm{in}(q,\omega)\geq 0$, cf. Corollary~\ref{C1}. We assume that the spherically symmetric function $\bar{\mathfrak{h}}_\mathrm{in}(q)=\sup_{\omega\in S^2}\mathfrak{h}_\mathrm{in}(q,\omega)$ satisfies
\begin{equation}\label{teches}
\int_0^\infty \bar{\mathfrak{h}}_\mathrm{in}(q) q\, (1+q)^2\,dq<\infty.
\end{equation}
Then, for all $t>1$,
\begin{equation}\label{asymh}
\|h(t)-\mathcal{T}_\phi[h_\mathrm{in}]\|_{L^\infty(\R^3)}\leq C \int_t^\infty e^{2\phi(s)}\,ds,
\end{equation}
where $C$ depends on $\tau(1)$ and $\tau_\infty$.
\end{theorem}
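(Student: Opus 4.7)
The plan is to use Corollary \ref{C1} to reduce the claim to a uniform Lipschitz-in-time estimate for the autonomous ultra-relativistic Fokker-Planck solution $g$ of Proposition \ref{P1}, and then to exploit the six-dimensional heat-equation reformulation used in the proof of that proposition. By Corollary \ref{C1}, $h(t,p) = g(\tau(t),p)$, where $g$ solves \eqref{ultra} with initial datum $h_{\mathrm{in}}$. Direct inspection of the explicit formula \eqref{ultrasol} against the definition \eqref{Tdef} yields $\mathcal{T}_\phi[h_{\mathrm{in}}](p) = g(\tau_\infty,p)$, so that, in view of $\tau_\infty - \tau(t) = \int_t^\infty e^{2\phi(s)}\,ds$, the inequality \eqref{asymh} is equivalent to
$$\|g(\tau(t),\cdot) - g(\tau_\infty,\cdot)\|_{L^\infty(\mathbb{R}^3)} \leq C(\tau_\infty - \tau(t)),$$
and it suffices to establish the uniform bound $\|\partial_\sigma g(\sigma,\cdot)\|_{L^\infty(\mathbb{R}^3)} \leq C$ for all $\sigma \geq \tau(1)$, after which the fundamental theorem of calculus in $\sigma$ closes the argument.

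To obtain this uniform bound, I would recall from the proof of Proposition \ref{P1} that for each fixed $\omega \in S^2$ the function $\mathfrak{u}^\omega(\sigma, r) := \mathfrak{g}(\sigma, r^2/4, \omega)$ extends radially to a classical solution of the standard heat equation in $\mathbb{R}^6$, with initial datum $\mathfrak{u}^\omega_{\mathrm{in}}(r) := \mathfrak{h}_{\mathrm{in}}(r^2/4, \omega)$. Writing $\mathfrak{u}^\omega(\sigma,\cdot) = G_\sigma \ast \mathfrak{u}^\omega_{\mathrm{in}}$ for the six-dimensional heat kernel $G_\sigma$, the pointwise identity $\partial_\sigma G_\sigma(x) = \sigma^{-1} G_\sigma(x)\bigl(|x|^2/(4\sigma) - 3\bigr)$ gives $\|\partial_\sigma G_\sigma\|_{L^\infty(\mathbb{R}^6)} \leq C \sigma^{-4}$, and Young's convolution inequality yields
$$\|\partial_\sigma \mathfrak{u}^\omega(\sigma)\|_{L^\infty(\mathbb{R}^6)} \leq C \sigma^{-4}\, \|\mathfrak{u}^\omega_{\mathrm{in}}\|_{L^1(\mathbb{R}^6)}.$$
The change of variables $q = r^2/4$ yields $\|\mathfrak{u}^\omega_{\mathrm{in}}\|_{L^1(\mathbb{R}^6)} = 32 |S^5| \int_0^\infty \mathfrak{h}_{\mathrm{in}}(q,\omega)\, q^2\,dq$, which is bounded uniformly in $\omega$ by $32 |S^5| \int_0^\infty \bar{\mathfrak{h}}_{\mathrm{in}}(q)\, q^2\,dq$, finite by hypothesis \eqref{teches}. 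Taking the supremum over $\omega$ and restricting to $\sigma \geq \tau(1)$ yields $\|\partial_\sigma g(\sigma,\cdot)\|_{L^\infty(\mathbb{R}^3)} \leq C \tau(1)^{-4} \int_0^\infty \bar{\mathfrak{h}}_{\mathrm{in}}(q)\,q^2\,dq$, and integrating from $\tau(t)$ to $\tau_\infty$ gives \eqref{asymh}.

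The main technical obstacle I anticipate is making the $\sigma$-differentiation under the convolution rigorous with only the moment integrability provided by \eqref{teches}; this should be handled by first mollifying the initial datum, applying the smoothing estimate to the regularized solutions, and passing to the limit using the $L^1$-stability of the ultra-relativistic Fokker-Planck semigroup recorded in \eqref{lpultra}. The presence in \eqref{teches} of the moment $\int \bar{\mathfrak{h}}_{\mathrm{in}}(q)\, q^3\, dq$, which is not strictly needed along the route above, suggests that the authors may instead proceed by direct manipulation of the Bessel-function formula for $\partial_\sigma \mathfrak{g}$ via the identity $\mathcal{I}_2'(y) = \mathcal{I}_1(y) - (2/y)\mathcal{I}_2(y)$, in which case tighter pointwise control of the integrand at large $q$ would be needed.
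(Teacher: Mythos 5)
Your proposal is correct, but it takes a genuinely different route from the paper's. Both arguments begin the same way: identify $h(t,\cdot)=g(\tau(t),\cdot)$ and $\mathcal{T}_\phi[h_\mathrm{in}]=g(\tau_\infty,\cdot)$, and reduce \eqref{asymh} to a Lipschitz-in-$\tau$ bound over $\tau\in(\tau(1),\tau_\infty)$. The paper then works directly on the Bessel representation: it applies the Mean Value Theorem to the kernel $H(\tau,q,z)=\tau^{-1}e^{-(q+z)/\tau}\mathcal{I}_2\bigl[2\sqrt{qz}/\tau\bigr]$ inside the $z$-integral and proves the pointwise bound $q^{-1}\sup_{\tau\in(\tau_1,\tau_\infty)}|\partial_\tau H|\leq C(1+z)^2$ using the recurrence $\mathcal{I}_2'[x]=\mathcal{I}_1[x]-\tfrac{2}{x}\mathcal{I}_2[x]$ together with the small- and large-argument asymptotics of $\mathcal{I}_\alpha$ (splitting on $2\sqrt{qz}/\tau\lessgtr 1$); the weight $(1+z)^2$ is exactly why the third moment appears in \eqref{teches}. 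You instead return to the six-dimensional heat reformulation, bound $\|\partial_\sigma G_\sigma\|_{L^\infty(\R^6)}\leq C\sigma^{-4}$, and invoke Young's inequality — your computations ($\partial_\sigma G_\sigma=\sigma^{-1}G_\sigma(|x|^2/(4\sigma)-3)$, and $\|\mathfrak{u}^\omega_\mathrm{in}\|_{L^1(\R^6)}=32|S^5|\int_0^\infty\mathfrak{h}_\mathrm{in}(q,\omega)q^2\,dq$) check out. What your route buys is the complete avoidance of Bessel-function estimates and a strictly weaker hypothesis: you only use $\sup_{\omega}\int_0^\infty\mathfrak{h}_\mathrm{in}(q,\omega)\,q^2\,dq<\infty$, whereas the paper's pointwise kernel bound consumes the full strength of \eqref{teches}; your closing guess about why the paper needs the extra moment is exactly right. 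What the paper's route buys is that it never leaves the explicit solution formula \eqref{ultrasol}, so no separate justification of differentiation under the convolution is needed. On that last point, your proposed mollification detour is unnecessary: since $\|\partial_\sigma G_\sigma\|_{L^\infty(\R^6)}\leq C\tau(1)^{-4}$ uniformly for $\sigma\geq\tau(1)$ and $\mathfrak{u}^\omega_\mathrm{in}\in L^1(\R^6)$, dominated convergence justifies $\partial_\sigma(G_\sigma\ast\mathfrak{u}^\omega_\mathrm{in})=(\partial_\sigma G_\sigma)\ast\mathfrak{u}^\omega_\mathrm{in}$ directly.
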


\begin{proof}
We have
\begin{align*}
h-\mathcal{T}_\phi[h_\mathrm{in}]&=\frac{e^{-\frac{q}{\tau(t)}}}{\tau(t)q}\int_0^\infty \mathfrak{h}_\mathrm{in}(z,\omega)ze^{-\frac{z}{\tau(t)}}\mathcal{I}_2\left[2\frac{\sqrt{q}}{\tau(t)}\sqrt{z}\right]dz\\
&\quad-\frac{e^{-\frac{q}{\tau_\infty}}}{\tau_\infty q}\int_0^\infty \mathfrak{h}_\mathrm{in}(z,\omega)ze^{-\frac{z}{\tau_\infty}}\mathcal{I}_2\left[2\frac{\sqrt{q}}{\tau_\infty}\sqrt{z}\right]dz\\
&=\frac{1}{q}\int_0^\infty \mathfrak{h}_\mathrm{in}(z,\omega)(H(\tau(t),q,z)-H(\tau_\infty,q,z)) z\,dz,
\end{align*}
where
\[
H(\tau,q,z):=\tau^{-1}e^{-\frac{q+z}{\tau}}\mathcal{I}_2\left[2\frac{\sqrt{q}}{\tau}\sqrt{z}\right].
\]
Since $t>1$, and $\tau(t)$ is increasing, we have $\tau_1:=\tau(1)<\tau(t)<\tau_\infty$.  By the Mean Value Theorem we estimate
\[
|h-\mathcal{T}_\phi[h_\mathrm{in}]|\leq \left(\int_t^\infty e^{2\phi(s)}ds\right)\frac{1}{q}\int_0^\infty  \mathfrak{h}_\mathrm{in}(z,\omega) \sup_{\tau\in(\tau_1,\tau_\infty)}|\partial_\tau H(\tau,q,z)| z\,dz.
\]
We prove below that
\begin{equation}\label{claimH}
\frac{1}{q}\sup_{\tau\in(\tau_1,\tau_\infty)}|\partial_\tau H(\tau,q,z)|\leq C(1+z)^2.
\end{equation}
Hence, having assumed \eqref{teches} we obtain
\[
|h-\mathcal{T}_\phi[h_\mathrm{in}]|\leq 
C\left(\int_t^\infty e^{2\phi(s)}ds\right)\int_0^\infty \bar{\mathfrak{h}}_\mathrm{in}(z) z(1+ z)^2\,dz \leq
C \int_t^\infty e^{2\phi(s)}ds,
\] 
which proves~\eqref{asymh}. It remains to establish~\eqref{claimH}.  Using the recurrence relation $$\mathcal{I}_2'[x]=\mathcal{I}_1[x]-\frac{2}{x}\mathcal{I}_2[x],$$ we find
\[
\partial_\tau H=\frac{e^{-\frac{q+z}{\tau}}}{\tau^2}\left(\mathcal{I}_2\left[2\frac{\sqrt{q}}{\tau}\sqrt{z}\right] \left (1+\frac{q+z}{\tau} \right)-2\frac{\sqrt{q}}{\tau}\sqrt{z}\,\mathcal{I}_1\left[2\frac{\sqrt{q}}{\tau}\sqrt{z}\right]\right).
\]
We shall use the following bounds satisfied by the modified Bessel functions~\cite{AS, Kreh}: for any $\alpha \in \mathbb{N}$, there exists $C > 0$ such that
$$ \mathcal{I}_\alpha[x]  \leq Cx^\alpha$$ for $x \leq 1$, while for $x \geq 1$ we have
$$ \mathcal{I}_\alpha[x] \leq Ce^x.$$ 
Thus, for $2\sqrt{qz}/\tau\leq 1$ we obtain
\[
|\partial_\tau H|\leq C\frac{e^{-\frac{q+z}{\tau}}}{\tau^4}qz \left (1+\frac{q+z}{\tau} \right),
\]
whence
\[
\frac{1}{q}\sup_{\tau\in(\tau_1,\tau_\infty)}|\partial_\tau H(\tau,q,z)|\leq C z(1+q+z)\leq C(1+z)^2.
\]
For $2\sqrt{qz}/\tau\geq 1$ we have 
\[
\frac{1}{q}|\partial_\tau H|\leq \frac{C}{\tau^2}e^{-\frac{(\sqrt{q}-\sqrt{z})^2}{\tau}}\left(\frac{1}{q}+\frac{1+z/q}{\tau}+\frac{\sqrt{z}}{\tau\sqrt{q}}\right)\leq C(1+z)^2,
\]
for $\tau\in (\tau_1,\tau_\infty)$. This completes the proof of the theorem.
\end{proof}
We remark that the technical condition~\eqref{teches} is just a little stronger than requiring that the first moment of $h_\mathrm{in}(p)$ is bounded in $L^1(\R^3)$.  In particular, if $h_\mathrm{in}$ is spherically symmetric then~\eqref{teches} is implied by any initial data $h_\mathrm{in} \in X$ satisfying \eqref{secondmombounded}.

\appendix
\setcounter{secnumdepth}{0}
\section{Appendix: Properties of the diffusion matrix}

In this appendix we collect some properties of the diffusion matrix
\[
D^{ij}[\phi]=\frac{e^{2\phi}\delta^{i j}+p^{i}p^{j}}{\sqrt{e^{2\phi}+|p|^2}},
\]
and other quantities which are used in the main body of the paper. We begin by proving~\eqref{boundsderD}.  Clearly 
\[
|D^{ij}[\phi]|\leq \sqrt{e^{2\phi}+|p|^2}.
\]
The first derivatives of $D$ are given by
\[
\partial_{p^k}D^{ij}=\frac{\delta^i_{\ k}p^j+\delta^j_{\ k}p^i}{\sqrt{e^{2\phi}+|p|^2}}-\frac{D^{ij}p_k}{e^{2\phi}+|p|^2},
\]
and therefore
\[
|\partial_{p^k}D^{ij}|\leq C.
\]
Moreover, we see that 
\[
\partial_{p^k}\partial_{p^l}D^{ij}=\frac{\delta^i_{\ k}\delta^j_{\ l}+\delta^j_{\ k}\delta^i_{\ l}}{\sqrt{e^{2\phi}+|p|^2}}-\frac{(\delta^i_{\ k}p^j+\delta^j_{\ k}p^i) p_l}{(e^{2\phi}+|p|^2)^{3/2}}-\frac{\partial_{p^l}D^{ij}p_k}{e^{2\phi}+|p|^2}-\frac{D^{ij}\delta_{kl}}{e^{2\phi}+|p|^2}+2\frac{D^{ij}p_kp_l}{(e^{2\phi}+|p|^2)^2}
\]
and each term in the right hand side is bounded in modulus by $C e^{-\phi}$, which proves the first estimate in~\eqref{boundsderD}. Furthermore
\[
\partial_{p^l}(\partial_{p^k}\partial_{p^i}D^{ij})=-3\frac{\delta^j_{\ k}p_l}{(e^{2\phi}+|p|^2)^{3/2}}-3\frac{\delta^j_{\ l}p_k+\delta_{kl}p^j}{(e^{2\phi}+|p|^2)^{3/2}}+9\frac{p^jp_kp_l}{(e^{2\phi}+|p|^3)^{5/2}},
\]
and each term in the right hand side is bounded in modulus by $C e^{-2\phi}$, which proves the second estimate in~\eqref{boundsderD}.

%
%

In order to justify the computation (\ref{Ddiff}), we first consider the function $$L(\psi) = e^{2\psi} D^{ij}[\psi]$$
and compute
$$L'(\psi) = \frac{e^{2\psi}}{\sqrt{e^{2\psi} + \vert p \vert^2}} \left [3e^{2\psi} \delta^{ij} + p^i p^j \right ] 
+  \frac{\vert p \vert^2 e^{2\psi}}{(e^{2\psi} + \vert p \vert^2)^{3/2}} \left [ e^{2\psi} \delta^{ij} + p^i p^j \right ].$$
Thus, for all bounded $\psi$,
\begin{align*}
\vert L'(\psi) \vert & \leq  e^{2\psi} e^{-\psi} \cdot 3e^{2\psi} + e^{2\psi} \vert p \vert^{-1} \left \vert p^i p^j \right \vert +  \vert p \vert^2 e^{2\psi} \left ( e^\psi \vert p \vert \right)^{-3/2} e^{2\psi} + 
\vert p \vert^2 e^{2\psi} \vert p\vert^{-3} \left \vert p^i p^j \right \vert \\
& \leq  4 e^{3\psi} + e^{2\psi}\vert p \vert^{1/2}e^{\psi/2} + 2e^{2\psi} \vert p \vert \\ 
& \leq C\sqrt{1 + \vert p \vert^2}.
\end{align*}
The inequality (\ref{Ddiffderiv}) follows similarly.

Additionally, we must verify the smoothness of coefficients within the system of stochastic differential equations (\ref{sde}) in order to arrive at the existence of a unique solution, where the vector $d$ and the matrix $G$ are given in~\eqref{ddef}-\eqref{Gdef}.
We first compute 
$$ \partial_{p^j} d^i(t,p) = \frac{3e^{2\bar{\phi}}}{(e^{2\bar{\phi}} + \vert p \vert^2)^{-3/2}} \left ( \delta^{ij}e^{2\bar{\phi}} + \delta^{ij} \vert p \vert^2 - p^i p^j \right ).$$
So, estimating we find for every $i,j$
\begin{align*}
\vert  \partial_{p^j} d^i(t,p) \vert & \leq  Ce^{2\bar{\phi}} (e^{2\bar{\phi}} + \vert p \vert^2 )^{-3/2} \left [e^{2\bar{\phi}} + \vert p \vert^2 \right ]\\
& \leq   Ce^{2\bar{\phi}} (e^{2\bar{\phi}} + \vert p \vert^2 )^{-1/2}\\
& \leq  C_T
\end{align*}
and derivatives are uniformly bounded.  Second derivatives can be computed and bounded similarly.

Next, we find
\begin{align*}
\partial_{p_k} G^{ij}(t,p) & =   \sqrt{2} e^{\bar{\phi}}\left (e^{2\bar{\phi}} + \vert p \vert^2 \right)^{-5/4} \left (e^{\bar{\phi}} + \sqrt{e^{2\bar{\phi}} + \vert p \vert^2} \right)^{-2} \left [ -\frac{1}{2} p_k p^i p^j e^{\bar{\phi}} - \frac{3}{2} p_k p^i p^j \sqrt{e^{2\bar{\phi}} + \vert p \vert^2} \right.\\
&\quad \left. + \ (e^{2\bar{\phi}} + \vert p \vert^2) \left (e^{\bar{\phi}} 
+ \sqrt{e^{2\bar{\phi}} + \vert p \vert^2} \right ) (\delta^i_kp^j + \delta^j_k p^i) - \frac{1}{2} e^{\bar{\phi}} \left (e^{\bar{\phi}}+\sqrt{e^{2\bar{\phi}} + \vert p \vert^2} \right)^2 p_k \delta^{ij}
 \right ]
 \end{align*}
and thus for every $i,j,k$
\begin{align*}
\vert \partial_{p_k} G^{ij}(t,p)\vert & \leq  C e^{\bar{\phi}}\left (e^{2\bar{\phi}} + \vert p \vert^2 \right)^{-5/4} \left (e^{\bar{\phi}} + \sqrt{e^{2\bar{\phi}} + \vert p \vert^2} \right)^{-2}  \\ 
&\quad\times \left [ \vert p \vert^3 \left (e^{\bar{\phi}} + \sqrt{e^{2\bar{\phi}} + \vert p \vert^2} \right) + \vert p \vert \left (e^{2\bar{\phi}} + \vert p \vert^2 \right) \left (e^{\bar{\phi}} + \sqrt{e^{2\bar{\phi}} + \vert p \vert^2} \right)\right.\\
&\qquad\qquad\qquad\left. + e^{\bar{\phi}} \vert p \vert \left (e^{\bar{\phi}} + \sqrt{e^{2\bar{\phi}} + \vert p \vert^2} \right)^2 \right ] \\
& \leq  C\vert p \vert e^{\bar{\phi}}\left (e^{2\bar{\phi}} + \vert p \vert^2 \right)^{-5/4} \left (e^{\bar{\phi}} + \sqrt{e^{2\bar{\phi}} + \vert p \vert^2} \right)^{-2} \cdot \\
&\quad \left [ \left (e^{2\bar{\phi}} + \vert p \vert^2 \right) \left (e^{\bar{\phi}} + \sqrt{e^{2\bar{\phi}} + \vert p \vert^2} \right)
+ e^{\bar{\phi}} \left (e^{\bar{\phi}} + \sqrt{e^{2\bar{\phi}} + \vert p \vert^2} \right)^2  \right ]   \\
& \leq   C \vert p \vert e^{\bar{\phi}}\left (e^{2\bar{\phi}} + \vert p \vert^2 \right)^{-1/4} \left (e^{\bar{\phi}} + \sqrt{e^{2\bar{\phi}} + \vert p \vert^2} \right)^{-1}
+ C \vert p \vert e^{2\bar{\phi}}\left (e^{2\bar{\phi}} + \vert p \vert^2 \right)^{-5/4} \\
& =  I + II.
\end{align*}
The first term satisfies $$ I \leq C e^{\bar{\phi}}\left (e^{2\bar{\phi}} + \vert p \vert^2 \right)^{-1/4}  \leq C e^{\bar{\phi}/2}.$$
Alternatively, the second term can be estimated for $\vert p \vert \leq e^{\bar{\phi}}$ as
$$II \leq C e^{\bar{\phi}} e^{2\bar{\phi}} e^{-5\bar{\phi}/2}  \leq Ce^{\bar{\phi}/2}$$
and for $\vert p \vert \geq e^{\bar{\phi}}$ as
$$II \leq C \vert p \vert e^{2\bar{\phi}} \left (\vert p \vert e^{\bar{\phi}} \right )^{-5/4}  \leq C\vert p \vert^{-1/4} e^{3\bar{\phi}/4} \leq Ce^{\bar{\phi}/2}. $$
Combining the estimates we conclude
$$\vert \partial_{p_k} G^{ij}(t,p)\vert \leq Ce^{\bar{\phi}/2} \leq C_T$$
for all $i, j, k$. Hence, derivatives are uniformly bounded in the time interval $[-T,0]$. As before, second derivatives can be bounded using similar estimates.

{\bf Acknowledgments:} The first author is sponsored by the Mexican National Council for Science and Technology (CONACYT) with scholarship number 214152. The last author is supported by the US National Science Foundation under grants DMS-0908413 and DMS-1211667.  In addition, the authors wish to thank the anonymous reviewers for careful analysis of the original manuscript and constructive feedback that served to improve the content of the paper.

\end{document}